\documentclass{article}

\usepackage[ruled]{algorithm2e}
\usepackage{amsmath, amsthm, amssymb}
\usepackage{fullpage}
\usepackage{verbatim}
\usepackage{xcolor}
\usepackage{graphicx}
\usepackage{framed}
\usepackage[numbers]{natbib}

\newtheorem{theorem}{Theorem}
\newtheorem{lemma}{Lemma}

\newcommand{\figwidth}{0.405\linewidth}

\newcommand{\cS}{\mathcal{S}}
\newcommand{\cA}{\mathcal{A}}
\newcommand{\cH}{\mathcal{H}}
\newcommand{\up}{u_P}
\newcommand{\ua}{u_A}
\newcommand{\vp}{v^P}
\newcommand{\va}{v^A}
\newcommand{\trans}{P}
\newcommand{\etrans}{P^E}
\newcommand{\last}{\mathrm{last}}
\newcommand{\eps}{\varepsilon}
\newcommand{\optstatmech}{\mathsf{OptStatMech}}
\newcommand{\rand}{\mathrm{rand}}

\DeclareMathOperator*{\argmax}{\mathrm{argmax}}

\title{Automated Dynamic Mechanism Design}

\date{}

\author{
    Hanrui Zhang \\
    Duke University \\
    \texttt{hrzhang@cs.duke.edu}
\and
    Vincent Conitzer \\
    Duke University \\
    \texttt{conitzer@cs.duke.edu}
}

\begin{document}

\maketitle

\begin{abstract}
    We study Bayesian automated mechanism design in unstructured dynamic environments, where a principal repeatedly interacts with an agent, and takes actions based on the strategic agent's report of the current state of the world.
    Both the principal and the agent can have arbitrary and potentially different valuations for the actions taken, possibly also depending on the actual state of the world.
    Moreover, at any time, the state of the world may evolve arbitrarily depending on the action taken by the principal.
    The goal is to compute an optimal mechanism which maximizes the principal's utility in the face of the self-interested strategic agent.

    We give an efficient algorithm for computing optimal mechanisms, with or without payments, under different individual-rationality constraints, when the time horizon is constant.
    Our algorithm is based on a sophisticated linear program formulation, which can be customized in various ways to accommodate richer constraints.
    For environments with large time horizons, we show that the principal's optimal utility is hard to approximate within a certain constant factor, complementing our algorithmic result.
    We further consider a special case of the problem where the agent is myopic, and give a refined efficient algorithm whose time complexity scales linearly in the time horizon.
    Moreover, we show that memoryless mechanisms, which are without loss of generality optimal in Markov decision processes without strategic behavior, do not provide a good solution for our problem, in terms of both optimality and computational tractability.
    These results paint a relatively complete picture for automated dynamic mechanism design in unstructured environments.
    Finally, we present experimental results where our algorithms are applied to synthetic dynamic environments with different characteristics, which not only serve as a proof of concept for our algorithms, but also exhibit intriguing phenomena in dynamic mechanism design.
\end{abstract}

\section{Introduction}

Consider the following scenario.
A company assembles an internal research group to develop key technologies to be used in the company's next-generation product in 5 years.
The more progress the group makes, the more successful the product is likely to be.
Since research progress is hard to monitor, the company manages the group based on its annual reports.
At the beginning of each year, the group submits a report, summarizing its progress in the preceding year, as well as its needs for the current year.
Taking into consideration this report (and possibly also reports from previous years), the company then decides the compensation level and the headcount of the group in the current year.
Moreover, after the product launches, the company may also pay a bonus to members of the group, depending on how successful the product is.

For simplicity, suppose an annual report consists of two items: research progress (satisfactory/unsatisfactory), and need to expand (no request/request for an intern/request for a full-time employee).
The company's goal is to encourage and facilitate research progress while keeping the expenses reasonable.
So, a natural managing strategy is to increase (resp.\ decrease) the compensation level when the reported research progress is satisfactory (resp.\ unsatisfactory), and to allow the group to expand only when necessary, i.e., when the reported research progress is unsatisfactory.
However, the research group may have a different goal than the company's.
Suppose members of the group do not care about the success of the product {\em per se}.
Instead, their primary goal is to maximize the total compensation received from the company, and for this reason, they may be incentivized to {\em misreport} the situation.
In other words, the company faces a {\em dynamic mechanism design} problem, where the {\em principal} (i.e., the company) needs to implement (and commit to) a {\em mechanism} (i.e., a managing strategy) that achieves its goal through {\em repeated} interactions, in the presence of strategic behavior of the {\em agent} (i.e., the research group).

Indeed this problem is nontrivial.
For example, if the company implements the above strategy, then the group will report satisfactory progress regardless of the actual situation, which maximizes the group's total compensation over the 5 years, but also causes greater expenses for the company and jeopardizes the success of the product.
To counter this, 
the company may additionally promise a significant bonus contingent on the success of the product.
This creates incentives for the group to make more progress, and discourages overreporting the progress, because the group is not allowed to expand when the reported progress is satisfactory.  That is, if actual progress is unsatisfactory, this introduces an incentive to report this truthfully so that the group may expand.
However, this also runs the risk of encouraging the group to report unsatisfactory progress in order to expand even if actual progress is satisfactory, because more members always make more progress, which leads to a higher (chance of) bonus, whereas the cost of expanding is paid by the company and therefore irrelevant to the group.

One may try to fix this by introducing more rules, possibly replacing existing ones.
For example, the company may allow the group to recruit an intern, but not a full-time employee, when the reported progress is unsatisfactory.
Then, in the next year, if the reported progress improves, the company allows the group to make a return offer to the intern as a full-time employee.
Or alternatively, the company may unconditionally allow the group to recruit interns (which are less costly), but never full-time employees.
In addition to the above, the company could also temporarily decrease the compensation level when a new member joins, and later adjust the compensation based on how the reported progress improves.
While all these ad hoc rules make intuitive sense, it is not immediately clear which (combinations of) rules are better, how to optimize parameters of these rules (e.g., the number of new members allowed per year and the amount by which the compensation is adjusted), or whether there is a better set of rules that look totally different.

As demonstrated by the foregoing discussion, in general, the problem of finding an optimal mechanism in {\em unstructured} dynamic environments, such as the above example, turns out to be extremely rich and challenging.
In such environments, the actions of the principal may go beyond the allocation of items to the agent, and affect the state of the world in arbitrary ways.
Moreover, both the principal and the agent may have arbitrary valuations for these actions, which also depend on the current state of the world.
In economic theory, the {\em characterize-and-solve} approach \cite{myerson1981optimal,courty2000sequential,pavan2014dynamic} to mechanism design has achieved spectacular success in both static and dynamic environments, by exploiting structure of the environment to construct a characterization of optimal mechanisms, often leading to closed-form or computationally tractable solutions.
However, since the environments under consideration here are loosely structured at best, the classical characterize-and-solve approach does not seem particularly suited.
When disregarding the agent's incentives, one could treat the problem of finding an optimal strategy as a {\em planning} problem, which is known to be solvable efficiently \cite{bellman1957markovian,howard1960dynamic,puterman1978modified}.
However, as discussed above, the agent's strategic behavior can ruin the performance of such a strategy.
From a computational perspective, while numerous methods for {\em automated mechanism design}, which efficiently compute optimal mechanisms without heavily exploiting structures of the environment, have been proposed \cite{conitzer2002complexity,conitzer2004self,conitzer2006computing}, all existing methods work only for static environments with one-time interactions, and it is not immediately clear how to generalize these methods to dynamic environments.
All this brings us to the following question:

\medskip
\begin{framed}
    \centering
    \em Can we efficiently compute optimal mechanisms in unstructured dynamic environments?
\end{framed}

\subsection{Our Results}

In this paper, we study the problem of computing optimal mechanisms in {\em single-agent}, {\em discrete-time} dynamic environments with a {\em finite time horizon}, without any further structural assumptions.
Our main results (presented in Section~\ref{sec:general}) can be summarized as follows:
\begin{itemize}
    \item {\bf Efficient algorithm}: when the time horizon is fixed, there is a polynomial-time algorithm for computing optimal mechanisms, with or without payments, that maximize the principal's utility facing a strategic agent.
    \item {\bf Inapproximability}: when the time horizon can be large, it is $\mathsf{NP}$-hard to approximate the principal's optimal utility within a factor of $(7/8 + \eps)$ for any $\eps > 0$.
\end{itemize}
To the best of our knowledge, our algorithm for constant time horizons is the first that efficiently computes optimal mechanisms in unstructured dynamic environments.
The fact that our algorithm cannot scale beyond constant time horizons is by no means surprising: optimal dynamic mechanisms generally depend on the entire history, and as a result, the straightforward description of such a mechanism is exponentially large in the time horizon.
Our inapproximability result further rules out the possibility of computing succinct representations of approximately optimal mechanisms that can be efficiently evaluated.
These results together paint a complete picture of the computational complexity of dynamic mechanism design in unstructured environments.

In Section~\ref{sec:myopic}, we zoom into a special case of the problem where the agent is {\em myopic}, i.e., where the agent cares only about immediate value when making decisions.
This is still practically meaningful, since it is commonly assumed and observed that the principal is often much more patient than the agent in dynamic environments.
(This could also correspond to the agent really being a sequence of short-lived agents; for example, there may be high turnover in the research group in the example above, where each researcher is there only for one period.)
We show that in such cases, without loss of generality, optimal mechanisms admit succinct representations, i.e., they depend only on the current state and time, the previous state, and the previous action.
Based on this characterization, we provide an improved algorithm for finding optimal mechanisms in the face of a myopic agent, whose time complexity depends {\em linearly} on the time horizon.
As a result, this algorithm scales well in dynamic environments with long time horizons, which is in sharp contrast to the general case where long time horizons lead to inapproximability.

As discussed above, without strategic behavior, our problem degenerates to the problem of planning in (finite episodic) Markov Decision Processes (MDPs).
It is known that in MDPs, optimal strategies are without loss of generality {\em memoryless}: they depend only on the current time and state.
To this end, one may wonder if memoryless mechanisms are also (approximately) optimal and/or easier to compute in dynamic environments with strategic behavior.
In Section~\ref{sec:memoryless}, we give a negative answer to the above question, by showing that (1) the principal's optimal utility achieved by memoryless mechanisms can be arbitrarily worse than that achieved by general dynamic mechanisms, and (2) it is $\mathsf{NP}$-hard to approximate the principal's optimal utility achieved by memoryless mechanisms within a factor of $(7/8 + \eps)$ for any $\eps > 0$.
In other words, memoryless mechanisms do not provide a good solution for our problem, in terms of both optimality and computational tractability.

Finally, in Section~\ref{sec:exp}, we apply our algorithms to synthetic dynamic environments with different characteristics, in order to provide a proof of concept for the methods we propose, as well as to explore various phenomena in dynamic environments and their implications for (automated) dynamic mechanism design.
Below are some of our key findings:
\begin{itemize}
    \item As in static environments, taking into consideration the agent's incentives in dynamic environments can greatly improve the principal's utility.
    \item In dynamic environments, optimal mechanisms are remarkably robust to misaligned interests between the principal and the agent, whereas the performance of na\"ive mechanisms (which disregard the agent's incentives) degrades much faster.
    \item Even when the principal's and the agent's valuations are perfectly aligned, an agent acting myopically can still considerably hurt the principal's utility in na\"ive mechanisms, but this can be largely corrected by using mechanisms that are optimal in the face of a myopic agent.
    \item As one would expect, patient agents are easier to cooperate with, and myopic agents are easier to exploit; however, even when the principal’s and the agent’s valuations are negatively correlated, it is possible to find a middle ground where cooperation is more beneficial than exploitation in the long run.
\end{itemize}

\subsection{Further Related Work}

\paragraph{Dynamic mechanism design.}
The problem we study can be situated in the broad area of dynamic mechanism design, and below we discuss some representative related work.
For a more comprehensive exposition, see, e.g., the survey by \citet{pavan2017dynamic} and the one by \citet{bergemann2019dynamic}.
In the context of efficient (i.e., welfare-maximizing) mechanisms, \citet{bergemann2010dynamic} propose the dynamic pivot mechanism, which generalizes the VCG mechanism in static environments, and \citet{athey2013efficient} propose the team mechanism, which focuses on budget-balancedness.
As for optimal (i.e., revenue-maximizing) mechanisms, which are more closely related to our results, following earlier work \cite{baron1984regulation,courty2000sequential,esHo2007optimal}, \citet{pavan2014dynamic} generalize the classical characterization by \citet{myerson1981optimal} into dynamic environments, unifying previous results with continuous type spaces.
\citet{ashlagi2016sequential} study ex-post individual-rational dynamic mechanisms for repeated auctions, and give an efficient $(1 - \eps)$-approximation to the optimal revenue for a single agent with independent valuations across items.
\citet{mirrokni2020non} study non-clairvoyant dynamic mechanism design, where future distributional knowledge is unavailable to the principal.
All these results for optimal mechanisms follow the characterize-and-solve approach, which is quite different from the computational approach that we take.

Particularly related to our results is the work by \citet{papadimitriou2016complexity}, who study a setting where one item is sold at each time, and agents' valuations can be correlated across items.
They show that designing an optimal deterministic mechanism is computationally hard even when there is only one agent and two items (thereby ruling out the possibility of efficiently computing optimal deterministic mechanisms in our model, which is more general).
And moreover, they give a polynomial-size linear program formulation for optimal randomized mechanisms for independent agents when the number of agents and the time horizon are both constant.
Restricted to a single agent, their LP formulation can be viewed as a special case of our main result: they focus on revenue maximization with a single item to be allocated at each time, in a model where the principal's actions cannot affect the future valuations of the agent; on the other hand, we allow the principal to care about actions as well as revenue, with actions being general and unstructured (as opposed to allocation/no allocation), where the future state of the world can depend arbitrarily on the principal's actions as well as the current state.

Another related line of work is devoted to studying the design of repeated allocation mechanisms without money, motivated for example by allocating shared computing resources over time \cite{guo2009competitive,freeman2018dynamic,balseiro2019multiagent,gorokh2019monetary}.
When there is no money, repeated allocation allows one to better take current preferences for the items into account, because one can ``pay'' for one's current allocation with one's future allocations.
Indeed, a common theme of this line of work is to introduce artificial currencies or to approximate mechanisms {\em with} money via the use of future allocations.  The algorithms we present here can be used to find optimal mechanisms without money directly.


\paragraph{Automated Mechanism Design.}
There is a rich body of research regarding automated mechanism design (AMD) in (essentially) static environments.
\citet{conitzer2002complexity,conitzer2004self} initiated the study of automated mechanism design.
They consider various specific static setups, and show that computing optimal deterministic mechanisms, even with a single agent, is often $\mathsf{NP}$-hard (which also rules out the possibility of efficiently computing optimal deterministic mechanisms in our model, since the 1-period case is a special case), while computing optimal randomized mechanisms is often tractable.
Conceptually related to our model, \citet{hajiaghayi2007automated} consider a model where agents enter and leave the mechanism online (but still have one-time interactions with the mechanism), and provide an algorithm for computing mechanisms that are competitive against the optimal allocation from hindsight. \citet{sandholm2007automated} study automated design of multistage mechanisms, but these are not for dynamic settings; instead, the motivation is to implement static mechanisms using multiple rounds of queries in order to minimize the communication cost.
\citet{sandholm2015automated} study automated design of combinatorial auction mechanisms, and \citet{balcan2016sample,balcan2018general} study the sample complexity thereof.
\citet{kephart2015complexity,kephart2016revelation} and \citet{zhang2021automated} study AMD with partial verification and/or reporting costs.
More recently, various methods have been proposed for automated mechanism design via machine learning \cite{narasimhan2016automated}, and in particular, deep learning \cite{feng2018deep,dutting2019optimal,shen2019automated,rahme2020permutation}.
All these results are essentially for static environments, whereas in this paper, we focus solely on AMD in dynamic environments.

\paragraph{Equilibrium computation.}
Our main result can be viewed as an efficient algorithm for computing Stackelberg equilibria in a special class of extensive-form games.
Equilibrium computation is quite well understood in normal-form games, where there are polynomial-time algorithms for computing a Stackelberg equilibrium \cite{conitzer2006computing}, or a Nash equilibrium when the game is zero-sum (see, e.g., \cite{tardos2007basic}), in two-player games, whereas finding a Nash equilibrium in general-sum two-player games is already $\mathsf{PPAD}$-complete \cite{daskalakis2009complexity,chen2006settling}.
For extensive-form games, \citet{von1996efficient} and \citet{koller1996efficient} propose the sequence-form representation, which leads to an efficient algorithm for finding a Nash equilibrium (which is also a Stackelberg equilibrium) in two-player zero-sum games.
However, as shown by \citet{letchford2010computing}, computing a Stackelberg equilibrium in two-player general-sum extensive-form games is $\mathsf{NP}$-hard in general.
Polynomial-time (exact or $(1 - \eps)$-approximation) algorithms are known only for highly restrictive cases, e.g., in perfect-information settings \cite{bovsansky2017computation}, or when the follower is a finite state machine with limited memory \cite{vcerny2020finite} (although practically scalable algorithms exist for more general settings \cite{bosansky2015sequence,cermak2016using,vcerny2018incremental,kroer2018robust}).
Our results push the boundary of tractability of Stackelberg equilibrium in extensive-form games, by enabling efficient computation in a nontrivial class of {\em general-sum} extensive-form games with {\em imperfect information}.


\section{Preliminaries}

\paragraph{Dynamic environments.}
Throughout this paper, we consider single-agent, discrete-time environments with a finite time horizon.
Below, we give a general definition of such a dynamic environment.
Let $T$ be the time horizon, $\cS$ be the state space, and $\cA$ be the action space.
The agent observes the state, but the principal controls the action that is taken.
For each $t \in [T]$, let $\vp_t: \cS \times \cA \to \mathbb{R}$ be the principal's valuation function, where for each state $s \in \cS$ and action $a \in \cA$, $\vp_t(s, a)$ is the value of the principal when playing action $a$ in state $s$, at time $t$; similarly, let $\va_t: \cS \times \cA \to \mathbb{R}$ be the agent's value function.
Let $\trans_0 \in \Delta(\cS)$ be the initial distribution over states, and for each $s \in \cS$, denote by $\trans_0(s)$ the probability that the initial state is $s$.
Moreover, for each $t \in [T]$, let $\trans_t: \cS \times \cA \to \Delta(\cS)$ be the transition operator, which maps a state-action pair $(s, a)$ at time $t$ to the distribution of the next state at time $t + 1$, $\trans_t(s, a) \in \Delta(\cS)$.
We denote by $\trans_t(s, a, s')$ the probability that the next state is $s'$ when playing action $a$ in state $s$ at time $t \in [T]$.
For notational simplicity, let $\trans_0(s, a, s') = \trans_0(s')$ for all $s, s' \in \cS$ and $a \in \cA$.
(Note that the first {\em actual} action is taken at $t=1$ --- not $t=0$ --- possibly based on the state at $t=1$.)

\paragraph{Histories.}
A $t$-step history is a sequence of states and actions $(s_1, a_1, s_2, \dots, a_{t - 1}, s_t, a_t)$, where for each $i \in [t]$, it is the case that $s_i \in \cS$ and $a_i \in \cA$.
For each $t \in [T]$, let $\cH_t$ be the set of all possible $t$-step histories, i.e.,
\[
    \cH_t = \{(s_1, a_1, \dots, s_t, a_t) \mid s_i \in \cS, a_i \in \cA \text{ for all } i \in [t]\}.
\]
For each $h = (s_1, a_1, \dots, s_t, a_t) \in \cH_t$, let $|h| = t$, and moreover, for any $s_{t + 1} \in \cS$, $a_{t + 1} \in \cA$, let $h + (s_{t + 1}, a_{t + 1}) = (s_1, a_1, \dots, s_{t + 1}, a_{t + 1})$.
Let $\cH_0 = \{\emptyset\}$, where $\emptyset$ corresponds to the empty history with $|\emptyset| = 0$.
Let $\cH = \cH_0 \cup \bigcup_{t \in [T - 1]} \cH_t$ be the set of all possible histories of length at most $T - 1$ in the dynamic environment.
Note that $\cH$ does not contain histories of length $T$.

\paragraph{Dynamic mechanisms.}
Dynamic mechanisms are more powerful than static ones, in that they may depend on the {\em entire history}, rather than only the current state.
A (randomized) dynamic mechanism $M = (\pi, p)$ consists of an action policy $\pi$ and a payment function $p$.
The action policy $\pi: \cH \times \cS \to \Delta(\cA)$ maps each history $h \in \cH$, extended with the reported current state $s \in \cS$, to a distribution over actions $\pi(h, s) \in \Delta(\cA)$.
We denote by $\pi(h, s, a)$ the probability that the action taken by the mechanism is $a$ for $(h, s)$.
The payment function $p: \cH \times \cS \to \mathbb{R}$ maps the extended history $(h, s)$ to a real number, i.e., the payment, made from the agent to the principal (but it can be negative).
We remark that in principle, one can absorb payments into the action space.
However, doing so would make the action space uncountable, introducing subtleties into the computational problem (which is the main focus of this paper).
Here, we keep payments separate and explicit to avoid such issues.
Also, our algorithm allows linear constraints on feasible payments, including but not limited to: nonnegative payments, no payments, etc.
See Section~\ref{sec:short-horizon} for more details.

\paragraph{Utilities.}
Fixing a mechanism $M = (\pi, p)$, we can then define the onward utility of the principal and the agent.
Let $\up^M: \cH \times \cS \to \mathbb{R}$ be the principal's onward utility function under mechanism $M$, defined inductively such that
\[
    \up^M(h, s) = \sum_a \pi(h, s, a) \cdot \left(\vp_{|h| + 1}(s, a) + \sum_{s'} \trans_{|h| + 1}(s, a, s') \cdot \up^M(h + (s, a), s')\right) + p(h, s),
\]
with the boundary condition that $\up^M(h, s) = 0$ for all $h \in \cH_T$ and $s \in \cS$.
Here, all summations are over the entire state/action space.
Let $\up^M(\emptyset)$ be the overall utility of the principal, i.e.,
\[
    \up^M(\emptyset) = \sum_s \trans_0(s) \cdot \up^M(\emptyset, s).
\]
Similarly, let $\ua^\pi: \cH \times \cS \to \mathbb{R}$ be the agent's onward utility function under mechanism $M$, defined such that
\[
    \ua^M(h, s) = \sum_a \pi(h, s, a) \cdot \left(\va_{|h| + 1}(s, a) + \sum_{s'} \trans_{|h| + 1}(s, a, s') \cdot \ua^M(h + (s, a), s')\right) - p(h, s),
\]
where $\ua^M(h, s) = 0$ for all $h \in \cH_T$ and $s \in \cS$.
And let $\ua^M(\emptyset)$ be the overall utility of the agent, i.e.,
\[
    \ua^M(\emptyset) = \sum_s \trans_0(s) \cdot \ua^M(\emptyset, s).
\]
We remark that while the above definition assumes that the principal cares about payments as much as the agent does, in fact, our algorithm allows for the principal to care about payments in an arbitrary linear way (including possibly not at all).
See Section~\ref{sec:short-horizon} for a detailed discussion.

\paragraph{Incentive-compatible mechanisms.}
We say a mechanism $M$ is incentive-compatible (IC) if the agent can never achieve a higher overall utility by misreporting the state, even in  sophisticated ways.
Formally, a reporting strategy $r: \cH \times \cS \to \cS$ maps each history $h$ extended with the current state $s$ to a reported state $s'$, which is possibly different from $s$.
This reporting strategy induces a reported history $r(h) = (s_1', a_1, \dots, s_t', a_t)$ for each actual history $h = (s_1, a_1, \dots, s_t, a_t)$, where for each $i \in [t]$,
\[
    s_i' = r((s_1, a_1, \dots, s_{i - 1}, a_{i - 1}), s_i).
\]
Note that we abuse notation here: in particular, $r(h, s)$ denotes a reported state, whereas $r(h)$ denotes a reported history.
And without loss of generality, we only consider deterministic reporting strategies.
Given a mechanism $M$ and a reporting strategy $r$, we can define the agent's utility function $\ua^{M, r}$ under $M$ and $r$ inductively such that
\begin{align*}
    \ua^{M, r}(h, s) & = \sum_a \pi(r(h), r(h, s), a) \cdot \left(\va_{|h| + 1}(s, a) + \sum_{s'} \trans_{|h| + 1}(s, a, s') \cdot \ua^{M, r}(h + (s, a), s')\right) \\
    & \quad - p(r(h), r(h, s)),
\end{align*}
where $\ua^{M, r}(h, s) = 0$ for all $h \in \cH_T$ and $s \in \cS$.
And let $\ua^{M, r}(\emptyset)$ be the overall utility of the agent, i.e.,
\[
    \ua^{M, r}(\emptyset) = \sum_s \trans_0(s) \cdot \ua^{M, r}(\emptyset, s).
\]
In words, $\ua^{M, r}$ is the utility function of the agent applying the reporting strategy $r$ in response to the mechanism $M$.
The mechanism $M$ is IC iff for any such reporting strategy $r$,
\[
    \ua^M(\emptyset) \ge \ua^{M, r}(\emptyset).
\]
Since the revelation principle holds in dynamic environments (see, e.g., \cite{myerson1986multistage}), we focus on IC mechanisms in the rest of the paper.\footnote{Of course, the revelation principle will not hold in our dynamic setting if we allow it to generalize a static setting in which the revelation principle does not hold.  For example, in the case of partial verification --- not every type being able to misreport every other type --- or costly misreporting, the revelation principle is known to hold only under certain conditions~\cite{kephart2016revelation}.  In this paper, we only consider the standard mechanism design setting in which every type can freely misreport any other type, but our techniques can be generalized to the other settings as well.}

\paragraph{Individually-rational mechanisms.}
When payments are allowed, it is standard to impose individual-rationality (IR) (also known as voluntary-participation) constraints on the mechanism, which roughly say that the agent should never be made worse off by participating in the mechanism.
In this paper, we consider two versions of IR constraints:
\begin{itemize}
    \item A mechanism $M$ is overall IR if the overall utility of the agent is nonnegative, i.e., $\ua^M(\emptyset) \ge 0$.
    This ensures that the agent is willing to participate in the overall mechanism.
    \item A mechanism $M$ is dynamic IR if the onward utility of the agent is nonnegative for every history $h$ and current state $s$, i.e., $\ua^M(h, s) \ge 0$ for all $h \in \cH$ and $s \in \cS$.
    This stronger notion of IR further ensures that the agent has no incentive to leave the mechanism at any time.
\end{itemize}
As discussed in later sections, our algorithms work for all $3$ cases regarding IR constraints: no IR (which results in an unbounded objective value if payments are allowed and valued by the principal), overall IR, and dynamic IR.

\section{Computation of Optimal Mechanism: the General Case}
\label{sec:general}

In this section, we investigate the computational problem of finding an optimal dynamic mechanism, which maximizes the principal's overall utility.
For concreteness, we assume that all components of the dynamic environment, including the time horizon $T$, state and action spaces $\cS$ and $\cA$, valuation functions $\vp$ and $\va$, and transition operator $\trans$, are given explicitly as input.

\subsection{Hardness Result for Long-Horizon Environments}
\label{sec:long-horizon}

First we show that the problem with an arbitrarily large time horizon $T$ is intractable.
In general, it takes exponentially many parameters in $T$ to describe a dynamic mechanism, which immediately rules out the possibility of computing a flat representation of an optimal mechanism in polynomial time.
However, this leaves the possibility of computing succinct representations, e.g., an oracle which maps extended histories to distributions over actions.
Our hardness result shows that it is hard to approximate the principal's maximum utility within a constant factor, which rules out the possibility of such succinct representations that can be efficiently evaluated, assuming $\mathsf{P} \ne \mathsf{NP}$.
The proof of the theorem, as well as all other proofs, are deferred to the appendices.

\begin{theorem}
\label{thm:long}
    When the time horizon $T$ can be arbitrarily large, it is $\mathsf{NP}$-hard to approximate the principal's maximum utility within a factor of $7/8 + \eps$ for any $\eps > 0$.
\end{theorem}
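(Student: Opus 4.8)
The $7/8$ factor is a strong hint that the reduction should go from \textsc{Max-3Sat} (or, more precisely, from \textsc{Max-E3Sat}, for which Håstad's theorem gives $\mathsf{NP}$-hardness of $(7/8+\eps)$-approximation). The plan is to encode a 3CNF formula $\varphi$ with clauses $C_1,\dots,C_m$ over variables $x_1,\dots,x_n$ as a dynamic environment in which the principal's optimal utility equals (a scaling of) the maximum fraction of satisfiable clauses, so that approximating the former within $7/8+\eps$ would approximate the latter within the same factor. The key design idea is to use the \emph{agent's report} as the mechanism for ``committing'' to a truth assignment, and to use IC together with the transition structure to force that commitment to be consistent across time.

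First I would lay out the phases of the environment. In an initial phase of $n$ steps the state encodes ``we are about to decide variable $x_i$''; the agent's report at step $i$ is interpreted as a claimed truth value $b_i \in \{0,1\}$ for $x_i$, and the principal's action at that step simply records $b_i$ into the next state (the state space needs to carry enough of the assignment-so-far, or at least a hashed/rolled-up summary, for this to work within a constant horizon — see the obstacle below). In a second phase of $m$ steps, at step $n+j$ the environment is in a state that ``asks about clause $C_j$''; here the agent must report whether $C_j$ is satisfied by the assignment carried in the state, the principal takes an action that yields principal-value $1$ if the reported bit is ``satisfied'' and $0$ otherwise, and a payment/valuation gadget is attached so that the agent is indifferent between the two reports \emph{except} through a penalty that makes lying about the clause strictly worse. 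The agent's valuations $\va$ are chosen to be essentially flat (say identically zero) on the first phase, so that in the first phase the agent has no reason to report anything other than... well, anything; it is the second phase that must be made truthful.

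The crux is arranging incentives so that (i) the agent is forced to report a single global assignment consistently, rather than tailoring $b_i$ to each clause, and (ii) conditional on the state honestly recording whatever the agent said, the agent truthfully reports clause-satisfaction. Point (ii) I would get by a standard ``punishment'' gadget: give the agent a large negative $\va$ (or a large payment) on any action the principal would only take after a report of ``unsatisfied'', while keeping the principal's value for that action at $0$; since the principal wants the value-$1$ action, the principal will play it whenever the agent reports ``satisfied'', and the agent, facing the punishment for ``unsatisfied'', reports ``satisfied'' exactly when it is true — here I need the state to genuinely encode the assignment so that the mechanism (which is a function of the reported history) can detect a false ``satisfied'' report and punish it. Point (i) is the subtle one and is where I expect the main obstacle: because the mechanism sees the entire reported history, the principal \emph{can} cross-check clause $j$'s reported satisfaction against the reported variable bits $b_1,\dots,b_n$ — so in fact the mechanism should punish any clause-report inconsistent with the recorded bits, which collapses the whole interaction to: ``the agent names an assignment $b$, then the principal's utility is the number of clauses of $\varphi$ satisfied by $b$, and the agent is (weakly) indifferent, hence the principal's optimal utility is $\max_b (\text{sat}_\varphi(b))$.'' Making the agent exactly indifferent over all assignments (so that there is no IC obstruction to the principal ``reading off'' the best assignment, and conversely the principal cannot do better than $\max_b\text{sat}_\varphi(b)$) requires care: the agent's valuations and the payments must be calibrated so that $\ua^M$ is the same constant no matter what the agent reports in phase one, while $\up^M$ strictly increases in the number of satisfied clauses. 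I would also need to double-check that the IR constraints (overall and dynamic) can be met by the intended optimal mechanism — e.g. by adding a uniform constant to $\va$ or a uniform up-front transfer so that onward agent utilities stay nonnegative — and that the reduction is polynomial: the horizon is $T = n+m+O(1)$, which is \emph{not} constant, consistent with the theorem's hypothesis, and $|\cS|,|\cA|$ must be polynomial, which is the real constraint, since naively ``recording the assignment in the state'' blows $|\cS|$ up to $2^n$.

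The main obstacle, then, is the state-space blowup: I cannot afford to carry the full partial assignment in $\cS$. The fix I would pursue is to \emph{not} have the principal verify global consistency at all, and instead interleave the phases: process clause $C_j$ immediately, reusing a bounded state that only needs to remember the (at most $3$) variable-bits relevant to the clause currently being examined, at the cost of letting the agent report each variable freshly for each clause — but then point (i) fails and the agent can satisfy every clause by lying locally, giving principal utility $m$ regardless of $\varphi$. Reconciling these is the heart of the construction: the standard resolution (cf. \citet{papadimitriou2016complexity}-style gadgets and the Håstad PCP structure) is to have the agent first \emph{globally commit} by reporting each $b_i$ once, pay/value things so the agent is indifferent at that stage, then in a verification phase pick a clause and two of its literals to ``test'' and let the agent re-report those literal values, with a punishment whenever the re-report disagrees with the committed value — the two-query consistency test is exactly what yields the $7/8$ soundness. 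So the real plan is: reduce from Håstad's 3-query PCP / \textsc{Max-E3Sat}, build a commit-then-test environment where the agent commits to an assignment, a random (or adversarially chosen by the principal) clause-and-literal test is run, IC forces the agent's test-answers to match its commitment, and the principal's utility tracks the fraction of satisfied clauses; completeness gives utility $\ge 1-\eps'$ when $\varphi$ is satisfiable and soundness gives $\le 7/8+\eps'$ when it is far from satisfiable, and a $(7/8+\eps)$-approximation of the principal's utility would decide which case holds.
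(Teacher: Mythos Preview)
Your proposal has a structural gap rooted in how the model works. The agent's report is a claimed \emph{state} (an element of $\cS$), and because we restrict to IC mechanisms, the principal's objective $\up^M(\emptyset)$ is evaluated under \emph{truthful} reporting. You therefore cannot use the report as a free channel for the agent to ``commit to an assignment'' that is not already encoded in the true state. Concretely: if in phase one the true state at step $i$ is some fixed $s_i$ and you make the agent indifferent, then every mechanism is trivially IC, the evaluation is at the truthful trajectory $s_1,\dots,s_n$ (which carries no assignment), and the principal is simply solving a fully observed MDP---which is polynomial, not hard. If instead you let the true state encode the assignment so that truthful reports carry it, $|\cS|$ blows up to $2^n$, as you noticed. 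Your commit-then-test fix does not escape this dilemma: the consistency check requires the committed bits to be accessible at test time, but under truthful reporting the history contains only the true states, so unless those states store the bits you are back to exponential $|\cS|$.

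The paper runs in the opposite direction: rather than trying to extract information from the agent, it makes the environment \emph{zero-sum} ($\va_t = 1 - \vp_t$), so the agent is adversarial and an optimal IC mechanism is without loss of generality \emph{blind}---any dependence on reports would be exploited, since IC in a zero-sum game means truthful reporting minimizes $\up$ among all reporting strategies, so one can always replace $M$ by the blind mechanism induced by a worst-case fixed report sequence without losing utility. The reduction is then very compact: $T=n$, $\cA=\{a_{\mathrm{pos}},a_{\mathrm{neg}}\}$ so the action at time $t$ fixes the literal of $x_t$, $|\cS|=m+1$ with one state $s_i$ per clause plus an absorbing state $s_0$, and the initial state is a uniformly random clause. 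In state $s_i$ at time $t$ the principal earns $1$ and transitions to $s_0$ iff the chosen literal of $x_t$ satisfies $c_i$, and otherwise stays in $s_i$ with value $0$. A blind deterministic mechanism is exactly a truth assignment, its utility is the fraction of clauses it satisfies, and H{\aa}stad's $(7/8+\eps)$-inapproximability for \textsc{Max-E3Sat} transfers directly.
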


\subsection{Algorithm for Short-Horizon Environments}
\label{sec:short-horizon}

Now we give a polynomial-time algorithm for computing an optimal mechanism when $T$ is a constant.
Our algorithm is based on a delicate linear program (LP) formulation, which relies on the following notation and concepts.

\paragraph{Feasible history-state pairs.}
A history-state pair $(h, s)$, where $h = (s_1, a_1, \dots, s_t, a_t)$, is $i$-feasible if $\trans_j(s_j, a_j, s_{j + 1}) > 0$ for every $j \in \{i, i + 1, \dots, t - 1\}$, and $\trans_t(s_t, a_t, s) > 0$.
In other words, starting from $s_i$ and taking the actions specified in $h$, there is a positive probability that the rest of the history and the state $s$ are generated from the transition operator.
We say a pair $(h, s)$ is feasible if it is $1$-feasible.

\paragraph{Feasible extensions.}
For two history-state pairs $(h, s)$ and $(h', s')$ where $h= (s_1, a_1, \dots, s_t, a_t)$ and $h' = (s'_1, a'_1, \dots, s'_{t'}, a'_{t'})$, we say that $(h', s')$ feasibly extends $(h, s)$, i.e., $(h, s) \subseteq (h', s')$, if $(h, s) = (h', s')$, or the following conditions hold simultaneously:
\begin{itemize}
    \item $t = |h| < |h'| = t'$.
    \item For any $i \in [t]$, $(s_i, a_i) = (s'_i, a'_i)$ (this holds automatically when $h = \emptyset$ and therefore $|h| = 0$).
    \item $s = s'_{t + 1}$.
    \item $(h', s')$ is $(|h| + 1)$-feasible (note that this does not require $h$ itself to be feasible).
\end{itemize}

\paragraph{Extended transition operator.}
For notational simplicity we define the following extended transition operator $\etrans_t: \cS \times \cA \to \Delta(\cS)$ for all $t \in \{0\} \cup [T]$, such that
\[
    \etrans_t(s, a, s') = \begin{cases}
        \trans_t(s, a, s'), & \text{if } \trans_t(s, a, s') > 0 \\
        1, & \text{otherwise}
    \end{cases}.
\]
In words, the extended transition operator assigns phantom probability $1$ to each way of transitioning that happens with probability $0$ (so $\etrans_t(s, a)$ does not always normalize to $1$).
As a shorthand, let $\etrans_0(s') = \etrans_0(s, a, s')$ for some $s \in \cS$ and $a \in \cA$ (the specific choice does not matter).
The extended transition operator helps in constructing the flow and IC constraints below and simplifies the formulation.
In particular, we always have $\etrans_t(s, a, s') > 0$.

\begin{figure}[t!]
    \begin{framed}
    \centering
    \begin{flalign}
        \text{\bf objective:} \quad & \max \sum_{h \in \cH, s \in \cS: (h, s)\text{ is feasible}} \left(\sum_{a \in \cA} \vp_{|h| + 1}(s, a) \cdot x(h, s, a) + y(h, s)\right) && \label{eq:obj}
    \end{flalign}
    \begin{flalign}
        \text{\bf flow constraints:} \quad  & z(h, s) = \sum_{a \in \cA} x(h, s, a) && \forall h \in \cH, s \in \cS \label{eq:flow1} \\
        & z(\emptyset, s) = \etrans_0(s) && \forall s \in \cS \label{eq:flow2} \\
        & z(h + (s, a), s') = \etrans_{|h| + 1}(s, a, s') \cdot x(h, s, a) && \forall h \in \cH, s, s' \in \cS, a \in \cA \label{eq:flow3}
    \end{flalign}
    \begin{flalign}
        \text{\bf utility:} \quad & u(h, s) = \sum_{h' \in \cH, s' \in \cS: (h, s) \subseteq (h', s')} \left(\sum_{a \in \cA} \va_{|h'| + 1}(s', a) \cdot x(h', s', a) - y(h', s') \right) && \forall h \in \cH, s \in \cS \label{eq:util}
    \end{flalign}
    \begin{flalign}
        \text{\bf IC constraints:} \quad & u(h, s, s') = \sum_{a \in \cA} \va_{|h| + 1}(s, a) \cdot x(h, s', a) - y(h, s') \nonumber \\
        & \phantom{u(h, s, s') =\ } + \sum_{a \in \cA, s'' \in \cS} \frac{\trans_{|h| + 1}(s, a, s'')}{\etrans_{|h| + 1}(s', a, s'')} \cdot u(h + (s', a), s'') && \forall h \in \cH, s, s' \in \cS \label{eq:ic1} \\
        & u(h, s) \ge \frac{\etrans_{|h|}(s_p, a_p, s)}{\etrans_{|h|}(s_p, a_p, s')} \cdot u(h, s, s'), \text{where}\, (s_p, a_p) = \last(h) && \forall h \in \cH, s, s' \in \cS \label{eq:ic2}
    \end{flalign}
    \begin{flalign}
        \text{\bf IR constraints:} \quad & u(h, s) \ge 0 && \forall h \in \cH, s \in \cS \label{eq:ir}
    \end{flalign}
    \begin{flalign}
        \text{\bf feasible actions:} \quad & x(h, s, a) \ge 0 && \forall h \in \cH, s \in \cS, a \in \cA \label{eq:actions}
    \end{flalign}
    \begin{flalign}
        \text{\bf feasible payments:} \quad & y(h, s) \ge 0 && \forall h \in \cH, s \in \cS \label{eq:payments}
    \end{flalign}
    \caption{Linear program for computing an optimal dynamic mechanism.}
    \label{fig:lp}
    \end{framed}
\end{figure}


\paragraph{Last state-action pair.}
For a history $h \in \cH$ where $h = (s_1, a_1, \dots, s_t, a_t)$, we use $\last(h)$ as a shorthand for the last state-action pair, i.e.,
    $\last(h) = (s_t, a_t)$.
In particular, when $h = \emptyset$, $\last(h)$ can be any state-action pair (the choice does not affect our results --- it is merely a simplifying shorthand).

Now we are ready to describe the LP formulation.
The complete formulation is given in Figure~\ref{fig:lp}.
The formulation is for nonnegative payments and dynamic IR constraints --- we will discuss later how the formulation can be modified to allow other types of constraints.
Below, we describe each of its components.

\paragraph{Variables, flow constraints, and the corresponding mechanism.}
There are $5$ classes of variables in the LP:
\begin{itemize}
    \item $x(h, s, a)$: the absolute, unconditional probability that the mechanism reaches state $s$ via history $h$, and takes action $a$.
    \item $y(h, s)$: the payment for history-state pair $(h, s)$, scaled by the probability that the mechanism reaches $s$ via $h$ (i.e., $z(h, s)$).
    \item $z(h, s)$: the probability that the mechanism reaches state $s$ via history $h$, which by definition satisfies
    \[
        z(h, s) = \sum_{a \in \cA} x(h, s, a).
    \]
    \item $u(h, s)$: the onward utility of the agent at state $s$ with history $h$ assuming truthful reporting, scaled by the probability that the mechanism reaches $s$ via $h$ (i.e., $z(h, s)$).
    \item $u(h, s, s')$: the onward utility of the agent at state $s$ with history $h$ if the agent misreports $s'$, assuming truthful reporting in the future, scaled by the probability that the mechanism reaches $s'$ via $h$ (i.e., $z(h, s')$).
\end{itemize}
The flow constraints (Eq.~\eqref{eq:flow1}-\eqref{eq:flow3}) enforce roughly the above interpretation of variables to $x(h, s, a)$ and $z(h, s)$, except for ways of transition that have probability $0$.
For each way of transition with probability $0$, the extended transition operator assigns phantom probability $1$.
This phantom probability is not counted in the objective function (because only feasible history-state pairs are counted) or in the utility variables $u(h, s)$ (because only feasible extensions are counted).
So, the phantom probability does not affect the principal's or the agent's utility assuming truthful reporting.
Instead, together with other constraints, it guarantees that the mechanism behaves well even for history-state pairs that appear with probability $0$ under truthful reporting, which is necessary for the mechanism to be IC (see later paragraphs).
Under the above interpretation, the LP variables (and in particular, $x(h, s, a)$, $y(h, s)$ and $z(h, s)$) naturally correspond to a mechanism $M = (\pi, p)$.
Formally, for each $h \in \cH$, $s \in \cS$:
\begin{itemize}
    \item If $z(h, s) > 0$, then
    \[
        p(h, s) = y(h, s) / z(h, s),
    \]
    and for each $a \in \cA$,
    \[
        \pi(h, s, a) = x(h, s, a) / z(h, s).
    \]
    \item If $z(h, s) = 0$, then let $\pi(h, s)$ be an arbitrary distribution over $\cA$, and $p(h, s) = 0$.
\end{itemize}
The feasibility of the mechanism (i.e., every $\pi(h, s)$ is a distribution over $\cA$ and every $p(h, s)$ is nonnegative) is guaranteed by constraints \eqref{eq:flow1}, \eqref{eq:actions} and \eqref{eq:payments}.
We remark that while the mechanism constructed from the LP variables may not be unique, effectively this makes no difference, since the parts of the mechanism that are chosen arbitrarily can never be accessed when executing the mechanism.
This is because $z(h, s) = 0$ only if at some point in the history $h$, there is an action that the mechanism would never play given the reported states and actions before that.
In particular, the above does not simply apply to all history-state pairs $(h, s)$ that are reached with probability $0$ under truthful reporting, in which case $z(h, s)$ may still be positive due to the extended transition operator.
Moreover, given any mechanism, one can construct LP variables in a similar way, such that the mechanism constructed from these variables is the same as the original mechanism (modulo the unreachable parts).
In other words, the above correspondence is effectively bijective.

\paragraph{The objective.}
The objective function of the LP (Eq.~\eqref{eq:obj}) is precisely the overall utility of the principal under the mechanism constructed above, assuming truthful reporting.
This is captured by the following lemma.

\begin{lemma}
\label{lem:obj}
    Let $M = (\pi, p)$ be the mechanism constructed from variables $x(h, s, a)$, $y(h, s)$, and $z(h, s)$ which satisfy the flow constraints.
    Then
    \[
        \up^M(\emptyset) = \sum_{h \in \cH, s \in \cS: (h, s)\text{ is feasible}} \left(\sum_{a \in \cA} \vp_{|h| + 1}(s, a) \cdot x(h, s, a) + y(h, s)\right).
    \]
\end{lemma}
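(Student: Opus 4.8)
The plan is to prove Lemma~\ref{lem:obj} by induction on the depth of histories, establishing a local (per-history-state-pair) identity that telescopes into the global one. Specifically, I would introduce for each feasible $(h, s)$ the quantity $\Phi(h,s) = z(h,s)\cdot \up^M(h,s)$, the onward utility of the principal weighted by the probability of reaching $(h,s)$ under truthful reporting, and prove by backward induction on $|h|$ (from $t = T-1$ down to $t = 0$) that
\[
    \Phi(h, s) = \sum_{\substack{(h', s') : (h,s) \subseteq (h', s'), \\ (h',s') \text{ feasible}}} \left(\sum_{a \in \cA} \vp_{|h'|+1}(s', a)\cdot x(h', s', a) + y(h', s')\right).
\]
Summing this identity at the ``root'' level, i.e.\ over $(\emptyset, s)$ with weights — actually, since $z(\emptyset,s) = \etrans_0(s) = \trans_0(s)$ (as $\trans_0(s)>0$ whenever $(\emptyset,s)$ is feasible) — gives $\sum_s \trans_0(s)\,\up^M(\emptyset, s) = \up^M(\emptyset)$ on the left, and the full sum over feasible $(h',s')$ on the right, which is exactly Eq.~\eqref{eq:obj}.

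The inductive step is where the flow constraints do the work. Fix a feasible $(h, s)$ with $|h| = t < T$. Assume $z(h,s) > 0$ (the case $z(h,s) = 0$ needs separate care — see below). Then $p(h,s) = y(h,s)/z(h,s)$ and $\pi(h,s,a) = x(h,s,a)/z(h,s)$, so expanding the definition of $\up^M(h,s)$ and multiplying through by $z(h,s)$:
\[
    \Phi(h,s) = \sum_a x(h,s,a)\cdot\vp_{t+1}(s,a) + y(h,s) + \sum_a x(h,s,a)\sum_{s'} \trans_{t+1}(s,a,s')\cdot \up^M(h+(s,a), s').
\]
Now I need to rewrite the last term using the flow constraint \eqref{eq:flow3}: $z(h+(s,a), s') = \etrans_{t+1}(s,a,s')\cdot x(h,s,a)$. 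When $\trans_{t+1}(s,a,s') > 0$ we have $\etrans_{t+1}(s,a,s') = \trans_{t+1}(s,a,s')$, so $x(h,s,a)\cdot\trans_{t+1}(s,a,s') = z(h+(s,a),s')$, turning the summand into $z(h+(s,a),s')\cdot\up^M(h+(s,a),s') = \Phi(h+(s,a),s')$; and the constraint that $(h,s)$ is feasible together with $\trans_{t+1}(s,a,s')>0$ ensures $(h+(s,a), s')$ is a feasible $1$-feasible pair and a feasible extension of $(h,s)$. When $\trans_{t+1}(s,a,s') = 0$ the term simply vanishes from the transition sum, and correspondingly $(h+(s,a),s')$ is not a feasible extension, so it does not appear on the right-hand side either — the phantom probability assigned by $\etrans$ is irrelevant here precisely because such pairs are excluded from both the objective and the utility recursion. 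Applying the inductive hypothesis to each $\Phi(h+(s,a),s')$ and observing that the feasible extensions of $(h,s)$ are exactly $(h,s)$ itself together with the disjoint union over $(a, s')$ of the feasible extensions of $(h+(s,a),s')$, the right-hand sides assemble into the claimed sum for $(h,s)$.

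The main obstacle I anticipate is handling history-state pairs with $z(h,s) = 0$ cleanly, and more generally making the bookkeeping of ``feasible extensions'' airtight. For $z(h,s)=0$: the left side $\Phi(h,s)=0$, but I must argue the right side also vanishes, i.e.\ that no feasible extension $(h',s')$ of $(h,s)$ carries positive $x$ or $y$ mass. This follows because $z$ propagates multiplicatively down the tree via \eqref{eq:flow3} with strictly positive factors $\etrans > 0$ along feasible extensions, so $z(h,s)=0$ forces $z(h',s')=0$ hence $x(h',s',a)=0$ by \eqref{eq:flow1}–\eqref{eq:actions} and $y(h',s')\le 0$... wait — $y \ge 0$ by \eqref{eq:payments}, so I actually need $z(h',s')=0 \Rightarrow y(h',s')=0$, which is not directly a constraint. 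I would instead note that the cleanest route is to observe that $y(h',s')$ unconstrained-to-zero does not matter for the \emph{statement} because Lemma~\ref{lem:obj} as written sums $y(h,s)$ directly (not $z(h,s)p(h,s)$), so really the identity I should prove by induction is the displayed $\Phi$-identity with the understanding that it is an identity among the LP variables, and the mechanism $M$ enters only through $\up^M$ on the left; when $z(h,s)=0$ I can pick the arbitrary $\pi,p$ and the claim degenerates to ``$0 = 0$'' only if the subtree $y$-values are zero, so in fact I should prove the slightly stronger statement restricted to the subtree rooted where $z$ first becomes positive — or, more simply, prove the identity $\sum_{(h,s)\text{ feasible}}(\cdots) = \up^M(\emptyset)$ directly by summing over \emph{all} feasible pairs and collapsing the transition sums, sidestepping the need for a per-node identity at $z=0$ nodes. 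I would pick whichever of these two framings makes the edge cases least painful; the substantive mathematical content is identical and lies entirely in the telescoping argument of the previous paragraph.
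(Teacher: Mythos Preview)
Your proposal is essentially the paper's own proof. The paper defines $\mathrm{obj}(h,s)$ as exactly your right-hand side (the sum over feasible extensions of $(h,s)$), proves by backward induction on $|h|$ that $\mathrm{obj}(h,s) = z(h,s)\cdot \up^M(h,s)$, and then sums at the root; the inductive step is carried out the same way you sketch, splitting off the $(h,s)$ term, using Eq.~\eqref{eq:flow3} to convert $x(h,s,a)\cdot\trans_{|h|+1}(s,a,s')$ into $z(h+(s,a),s')$ on the support, and invoking the hypothesis on each child.

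One remark: the edge case you flag---that when $z(h,s)=0$ the identity would require the subtree's $y$-contributions to vanish, which is not forced by the flow constraints alone---is real, and the paper's proof simply writes $y(h,s) = z(h,s)\,p(h,s)$ without comment, i.e., it implicitly treats this case as degenerate. So you are not missing a trick; the paper glosses over the same point. Your instinct to either (i) restrict attention to the subtree where $z$ is positive (which is all that matters for $\up^M(\emptyset)$, since zero-$z$ branches are unreachable under any reporting strategy and contribute nothing to the principal's utility), or (ii) prove the root-level identity directly without insisting on the per-node identity at dead nodes, is the right way to make the argument airtight; either works and neither changes the substance.
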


From this lemma, it is clear that the objective of the LP is the natural quantity to maximize.

\paragraph{Utility.}
The utility constraints (Eq.~\eqref{eq:util}) collect the agent's onward utility, where $u(h, s)$ is equal to the agent's onward utility in state $s$ from history $h$, assuming truthful reporting, scaled by $z(h, s)$.
This is captured by the following lemma.

\begin{lemma}
\label{lem:util}
    Let $M = (\pi, p)$ be the mechanism constructed from variables $x(h, s, a)$, $y(h, s)$, and $z(h, s)$ which satisfy the flow and utility constraints.
    For all $h \in \cH$, $s \in \cS$, 
    \[
        u(h, s) = z(h, s) \cdot \ua^M(h, s).
    \]
\end{lemma}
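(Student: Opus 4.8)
The plan is to establish the identity by \emph{backward induction on $|h|$}, from $|h| = T-1$ down to $|h| = 0$, in parallel with the recursive definition of $\ua^M$ and the flat-sum definition of $u(h,s)$ in Eq.~\eqref{eq:util}. For the base case $|h| = T-1$, the only pair $(h',s')$ with $(h,s) \subseteq (h',s')$ is $(h,s)$ itself, since any proper feasible extension has length exceeding $T-1$ and hence lies outside $\cH$; thus Eq.~\eqref{eq:util} collapses to $u(h,s) = \sum_a \va_T(s,a)\,x(h,s,a) - y(h,s)$. On the other side, $\ua^M(h+(s,a),s') = 0$ for every $a,s'$ by the boundary condition (these are length-$T$ histories), so $\ua^M(h,s) = \sum_a \pi(h,s,a)\,\va_T(s,a) - p(h,s)$; assuming $z(h,s) > 0$ and substituting $x(h,s,a) = z(h,s)\pi(h,s,a)$ and $y(h,s) = z(h,s)p(h,s)$ gives the claim.

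For the inductive step, fix $h$ with $|h| \le T-2$ and $s \in \cS$, and assume the identity for all pairs with strictly longer history. The crux is a combinatorial decomposition of the index set of Eq.~\eqref{eq:util}:
\[
    \{(h',s') : (h,s) \subseteq (h',s')\} \;=\; \{(h,s)\} \;\sqcup\; \bigsqcup_{a \in \cA}\ \bigsqcup_{s'' \,:\, \trans_{|h|+1}(s,a,s'') > 0} \{(h',s') : (h+(s,a),s'') \subseteq (h',s')\},
\]
where the union is disjoint because the blocks differ in the action taken at step $|h|+1$ and, within a fixed action, in the state occupied at step $|h|+2$ (or else equal the singleton $(h+(s,a),s'')$), and all of them have length exceeding $|h|$. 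Verifying this amounts to unwinding the definition of feasible extension: $(|h|+1)$-feasibility of $(h',s')$ whose step-$(|h|+1)$ state is $s$ and whose step-$(|h|+1)$ action is $a$ is equivalent to $\trans_{|h|+1}(s,a,s'') > 0$ — with $s''$ the step-$(|h|+2)$ state, or $s'$ itself when $|h'| = |h|+1$ — together with $(|h|+2)$-feasibility, i.e.\ with $(h+(s,a),s'') \subseteq (h',s')$. Splitting the sum in Eq.~\eqref{eq:util} along this partition yields
\[
    u(h,s) \;=\; \Big(\sum_a \va_{|h|+1}(s,a)\,x(h,s,a) - y(h,s)\Big) \;+\; \sum_a \sum_{s'' \,:\, \trans_{|h|+1}(s,a,s'') > 0} u(h+(s,a),s'').
\]

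To finish the step, apply the inductive hypothesis $u(h+(s,a),s'') = z(h+(s,a),s'')\cdot \ua^M(h+(s,a),s'')$, and use flow constraint~\eqref{eq:flow3} together with $\etrans_{|h|+1}(s,a,s'') = \trans_{|h|+1}(s,a,s'')$ for the states $s''$ in play to rewrite $z(h+(s,a),s'') = \trans_{|h|+1}(s,a,s'')\,x(h,s,a)$. Collapsing the double sum and extending it harmlessly to all $s'' \in \cS$ (the added terms vanish), the right-hand side becomes $\sum_a x(h,s,a)\big(\va_{|h|+1}(s,a) + \sum_{s''}\trans_{|h|+1}(s,a,s'')\,\ua^M(h+(s,a),s'')\big) - y(h,s)$, which is exactly $z(h,s)\,\ua^M(h,s)$ after substituting $x(h,s,a) = z(h,s)\pi(h,s,a)$ and $y(h,s) = z(h,s)p(h,s)$ into the recursion defining $\ua^M$.

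I expect the main obstacle to be twofold. First, the combinatorial decomposition must be handled with care, especially the degenerate block $|h'| = |h|+1$ (where the ``step-$(|h|+2)$ state'' collapses to $s'$ and the block is the singleton $\{(h+(s,a),s'')\}$) and the observation that $h+(s,a) \in \cH$ precisely when $|h| \le T-2$, so that the induction bottoms out cleanly at $|h|=T-1$. Second, the case $z(h,s) = 0$ needs separate treatment: by flow constraint~\eqref{eq:flow1} and nonnegativity~\eqref{eq:actions} we get $x(h,s,a)=0$ for all $a$, and via~\eqref{eq:flow3} this propagates to the entire subtree rooted at $(h,s)$; since the constructed mechanism then sets $p(h,s)=0$ and plays arbitrarily, one checks that both sides are $0$, using nonnegativity of payments~\eqref{eq:payments} (and, in the IR variants, constraint~\eqref{eq:ir}) to conclude that the $y(\cdot)$ values on this subtree also vanish. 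The remaining manipulations are routine bookkeeping, analogous to the proof of Lemma~\ref{lem:obj}.
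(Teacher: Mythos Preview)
Your proposal is correct and follows essentially the same backward-induction argument the paper uses for Lemma~\ref{lem:obj} (the paper simply states that the proof of Lemma~\ref{lem:util} is ``essentially the same''). Your explicit treatment of the decomposition of feasible extensions and of the $z(h,s)=0$ edge case is in fact more careful than the paper's own write-up; note only that forcing $y(h',s')=0$ on a $z=0$ subtree genuinely needs the IR constraint~\eqref{eq:ir} in addition to nonnegativity~\eqref{eq:payments}, as you parenthetically indicate.
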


The proof of Lemma~\ref{lem:util} is essentially the same as that of Lemma~\ref{lem:obj}.
Given the correspondence to the agent's utility $\ua^M(h, s)$, the utility variables $u(h, s)$ act as auxiliary variables in IC constraints.

\paragraph{IC constraints.}
IC constraints are a key component of the LP formulation.
There are two families of IC constraints: collecting the agent's scaled utility from single-step misreporting (Eq.~\eqref{eq:ic1}), and subsequently restricting the mechanism such that there is no incentive for misreporting (Eq.~\eqref{eq:ic2}).
In Eq.~\eqref{eq:ic1}, we build variables $u(h, s, s')$, which is supposed to be the onward utility of the agent in state $s$ from history $h$ misreporting $s'$, assuming truthful reporting in the future, scaled by $z(h, s')$ (rather than $z(h, s)$).
This is captured by the following lemma.

\begin{lemma}
\label{lem:ic1}
    Let $M = (\pi, p)$ be the mechanism constructed from variables $x(h, s, a)$, $y(h, s)$, and $z(h, s)$ which satisfy the flow constraints, the utility constraints, and Eq.~\eqref{eq:ic1}.
    Then the following statement holds: for all $h \in \cH$, $s, s' \in \cS$, let reporting strategy $r_{h, s, s'}$ be such that
    \[
        r_{h, s, s'}(h', s'') = \begin{cases}
            s', & \text{if } h = h' \text{ and } s = s'' \\
            s'', & \text{otherwise}
        \end{cases}.
    \]
    That is, $r_{h, s, s'}$ misreports $s'$ only in state $s$ from history $h$, and reports truthfully otherwise.
    Then for all $h \in \cH$, $s, s' \in \cS$,
    \[
        u(h, s, s') = z(h, s') \cdot \ua^{M, r_{h, s, s'}}(h, s).
    \]
\end{lemma}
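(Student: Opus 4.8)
The plan is to prove Lemma~\ref{lem:ic1} by downward induction on $|h|$, mirroring the structure of the proofs of Lemma~\ref{lem:obj} and Lemma~\ref{lem:util}, but now carefully tracking the difference between the ``honest'' scaling factor $z(h,s)$ and the ``reported'' scaling factor $z(h,s')$ that appears in $u(h,s,s')$. The key conceptual point is that under the reporting strategy $r_{h,s,s'}$, the agent deviates exactly once --- at history $h$, state $s$ --- and reports truthfully everywhere else; so from the mechanism's point of view the reported history after the deviation is $h + (s', a)$ (with the actual action $a$ that the mechanism plays in response to the report $s'$), while the agent's \emph{actual} state still evolves according to the true transition $\trans_{|h|+1}(s, a, \cdot)$. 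I would first unfold the definition of $\ua^{M, r_{h,s,s'}}(h,s)$: since $r_{h,s,s'}$ reports $s'$ here, the mechanism uses $\pi(h, s', \cdot)$ and $p(h, s')$, and after this step the reporting strategy is truthful, so the continuation utility from actual state $s''$ is $\ua^M(h + (s', a), s'')$, which by Lemma~\ref{lem:util} equals $u(h + (s', a), s'')/z(h + (s', a), s'')$.

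Next I would substitute $\pi(h,s',a) = x(h,s',a)/z(h,s')$ and $p(h,s') = y(h,s')/z(h,s')$ (valid when $z(h,s') > 0$; the $z(h,s')=0$ case is handled separately below), and $z(h + (s',a), s'') = \etrans_{|h|+1}(s', a, s'') \cdot x(h, s', a)$ by the flow constraint \eqref{eq:flow3}. Multiplying through by $z(h,s')$, the term $\sum_{s''} \trans_{|h|+1}(s,a,s'') \cdot \ua^M(h+(s',a), s'')$ scaled by $z(h,s')\,\pi(h,s',a) = x(h,s',a)$ becomes exactly
\[
    \sum_{s''} \trans_{|h|+1}(s, a, s'') \cdot \frac{x(h, s', a)}{\etrans_{|h|+1}(s', a, s'') \cdot x(h,s',a)} \cdot u(h + (s', a), s'') = \sum_{s''} \frac{\trans_{|h|+1}(s,a,s'')}{\etrans_{|h|+1}(s', a, s'')} \cdot u(h+(s',a), s''),
\]
and summing over $a$ recovers precisely the right-hand side of Eq.~\eqref{eq:ic1}, together with the immediate-payoff terms $\sum_a \va_{|h|+1}(s, a) x(h, s', a) - y(h, s')$. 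This establishes $u(h,s,s') = z(h,s') \cdot \ua^{M, r_{h,s,s'}}(h,s)$ in the case $z(h,s') > 0$. The base of the induction, $|h| = T-1$ (so the successor histories have length $T$ and contribute zero onward utility), follows from the same computation with the continuation terms dropped; note the induction only needs Lemma~\ref{lem:util} (truthful continuation utilities), not a self-referential appeal to Lemma~\ref{lem:ic1}, so no genuine recursion on the deviation is required.

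The main obstacle I anticipate is the degenerate case $z(h,s') = 0$, i.e., the mechanism reaches $s'$ via $h$ with probability zero and $\pi(h,s'), p(h,s')$ were set arbitrarily. Here both sides of the claimed identity must vanish: the left side $u(h,s,s') = z(h,s') \cdot (\cdots) = 0$ trivially once one checks that Eq.~\eqref{eq:ic1}'s right-hand side is finite --- which requires arguing that if $z(h,s') = 0$ then all the referenced $u(h+(s',a), s'')$ and $x(h,s',a)$ and $y(h,s')$ are themselves zero, propagating down via the flow constraints \eqref{eq:flow1} and \eqref{eq:flow3} (since $x(h,s',a) \le z(h,s') = 0$ and $x \ge 0$, and $z(h+(s',a),s'') = \etrans \cdot x(h,s',a) = 0$, and by the scaling in Lemma~\ref{lem:util} with $z(h+(s',a),s'')=0$ we need $u(h+(s',a),s'')=0$ too --- which itself needs a small downward induction on the utility variables being zero along zero-probability branches). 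On the other side, $\ua^{M,r_{h,s,s'}}(h,s)$ is a finite real number, so $z(h,s') \cdot \ua^{M,r_{h,s,s'}}(h,s) = 0$ as well. I would isolate this bookkeeping into a short preliminary claim (``if $z(h,s)=0$ then $x(h,s,a) = y(h,s) = u(h,s) = 0$ and $z(h',s'')=u(h',s'')=0$ for every feasible extension $(h',s'')$ of $(h,s)$''), which is a routine consequence of the flow constraints and the definitions, and then the two cases of the main induction go through cleanly. Everything else is the kind of mechanical substitution already performed in Lemmas~\ref{lem:obj} and~\ref{lem:util}.
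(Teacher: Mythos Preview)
Your approach is essentially the paper's: substitute $u(h+(s',a),s'') = z(h+(s',a),s'')\cdot\ua^M(h+(s',a),s'')$ via Lemma~\ref{lem:util} and $z(h+(s',a),s'') = \etrans_{|h|+1}(s',a,s'')\cdot x(h,s',a)$ via Eq.~\eqref{eq:flow3}, then factor out $z(h,s')$ --- no induction on $|h|$ is actually needed, as you yourself observe. The paper runs the computation in the opposite direction (starting from Eq.~\eqref{eq:ic1} and arriving at $z(h,s')\cdot\ua^{M,r_{h,s,s'}}(h,s)$), which sidesteps your division-by-zero bookkeeping; one small caveat is that your claimed $y(h,s)=0$ when $z(h,s)=0$ does not follow from the flow constraints and definitions alone --- you need the nonnegativity of payments together with the utility/IR constraints to force it.
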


Given Lemma~\ref{lem:ic1}, Eq.~\eqref{eq:ic2} then guarantees that the mechanism $M$ is robust against single-step misreporting for all reachable history-state pairs.
\begin{lemma}
\label{lem:ic2}
    Let $M = (\pi, p)$ be the mechanism constructed from variables $x(h, s, a)$, $y(h, s)$, and $z(h, s)$ which satisfy the flow constraints, the utility constraints, and Eq.~\eqref{eq:ic1}.
    The following is true if and only if the LP variables also satisfy Eq.~\eqref{eq:ic2}: for all $h \in \cH$, $s, s' \in \cS$ where $(h, s)$ is reachable by the mechanism $M$,
    \[
        \ua^M(h, s) \ge \ua^{M, r_{h, s, s'}}(h, s).
    \]
\end{lemma}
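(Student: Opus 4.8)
The plan is to reduce Eq.~\eqref{eq:ic2} to the desired onward-utility inequality by plugging in the identities already proved and carefully tracking the scaling factors, so most of the work is bookkeeping rather than new ideas. Fix $h \in \cH$ and $s, s' \in \cS$, and write $(s_p, a_p) = \last(h)$. First I would rewrite both sides of Eq.~\eqref{eq:ic2}: by Lemma~\ref{lem:util} the left-hand side is $u(h,s) = z(h,s)\cdot\ua^M(h,s)$, and by Lemma~\ref{lem:ic1} we have $u(h,s,s') = z(h,s')\cdot\ua^{M, r_{h,s,s'}}(h,s)$, so the right-hand side of Eq.~\eqref{eq:ic2} equals $\frac{\etrans_{|h|}(s_p,a_p,s)}{\etrans_{|h|}(s_p,a_p,s')}\cdot z(h,s')\cdot\ua^{M, r_{h,s,s'}}(h,s)$. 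Note the scaling on $u(h,s,s')$ is by $z(h,s')$, not $z(h,s)$ — this mismatch is exactly what the $\etrans$-ratio in Eq.~\eqref{eq:ic2} is there to repair.

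The key step is the flow identity
\[
    \frac{\etrans_{|h|}(s_p,a_p,s)}{\etrans_{|h|}(s_p,a_p,s')}\cdot z(h,s') = z(h,s),
\]
valid for every $h, s, s'$ (the fraction is well defined since $\etrans$ is always strictly positive). When $|h| \ge 1$, write $h = \bar h + (s_p, a_p)$ with $|\bar h| = |h| - 1$; then Eq.~\eqref{eq:flow3}, applied with history $\bar h$, state $s_p$, action $a_p$, and next states $s$ resp.\ $s'$, gives $z(h, s) = \etrans_{|h|}(s_p, a_p, s)\cdot x(\bar h, s_p, a_p)$ and $z(h, s') = \etrans_{|h|}(s_p, a_p, s')\cdot x(\bar h, s_p, a_p)$, so the identity follows by cancelling $x(\bar h, s_p, a_p)$; here I would double-check the index alignment, namely that the subscript $|h|$ in Eq.~\eqref{eq:ic2} matches the subscript $|\bar h|+1$ appearing in Eq.~\eqref{eq:flow3}. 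For $h = \emptyset$, Eq.~\eqref{eq:flow2} gives $z(\emptyset, s) = \etrans_0(s)$ and $z(\emptyset, s') = \etrans_0(s')$, and since $\etrans_0(s_p, a_p, \cdot) = \etrans_0(\cdot)$ regardless of the (arbitrary) choice of $\last(\emptyset)$, the identity holds there too. Substituting, Eq.~\eqref{eq:ic2} for the triple $(h, s, s')$ is \emph{equivalent} to
\[
    z(h,s)\cdot\ua^M(h,s) \ge z(h,s)\cdot\ua^{M, r_{h,s,s'}}(h,s).
\]

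Finally I would close the equivalence. If $(h,s)$ is reachable by $M$ — which, per the discussion preceding the lemma and the meaning of the $z$ variables, is exactly the condition $z(h,s) > 0$ — then dividing the last display by $z(h,s)$ shows it is equivalent to $\ua^M(h,s) \ge \ua^{M, r_{h,s,s'}}(h,s)$; conversely, if $z(h,s) = 0$ the display reads $0 \ge 0$, which is automatically true and imposes nothing. Hence the LP variables satisfy Eq.~\eqref{eq:ic2} for all $h, s, s'$ if and only if $\ua^M(h,s) \ge \ua^{M, r_{h,s,s'}}(h,s)$ holds for every reachable $(h,s)$ and every $s'$, which is the claim. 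The one point I would treat most carefully — the main (though modest) obstacle — is justifying that ``$(h,s)$ reachable by $M$'' coincides with ``$z(h,s) > 0$'': I would argue that $z(h,s) > 0$ precisely when every action recorded in $h$ lies in the support of $\pi$ given the preceding reported history, which is exactly the set of pairs that can arise during an execution of $M$ and hence the only pairs at which single-step IC must be enforced, while the phantom probabilities from $\etrans$ ensure the unreachable pairs ($z(h,s)=0$) are neither overconstrained nor able to falsify Eq.~\eqref{eq:ic2}.
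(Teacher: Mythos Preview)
Your proposal is correct and follows essentially the same route as the paper: both arguments substitute Lemmas~\ref{lem:util} and~\ref{lem:ic1} into Eq.~\eqref{eq:ic2}, use the flow constraints (Eq.~\eqref{eq:flow2} for $h=\emptyset$, Eq.~\eqref{eq:flow3} for $|h|\ge 1$) to cancel the $\etrans$-ratio, and then observe that the resulting inequality has the common factor $z(h,s)$ (equivalently $x(\bar h, s_p, a_p)$), which is positive exactly when $(h,s)$ is reachable. Your flow identity $\frac{\etrans_{|h|}(s_p,a_p,s)}{\etrans_{|h|}(s_p,a_p,s')}\cdot z(h,s') = z(h,s)$ is just a repackaging of the paper's cancellation step, and your handling of the reachability equivalence and the vacuous $z(h,s)=0$ case matches the paper's reasoning.
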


We then show that a mechanism is IC if and only if there is no incentive for single-step misreporting, which directly implies that the mechanism $M$ constructed from the LP variables is IC.
This is captured by the following lemma.
\begin{lemma}
\label{lem:ic}
    Let $M = (\pi, p)$ be the mechanism constructed from variables $x(h, s, a)$, $y(h, s)$, and $z(h, s)$ which satisfy the flow constraints, the utility constraints, and Eq.~\eqref{eq:ic1}.
    Then $M$ is IC if and only if the LP variables also satisfy Eq.~\eqref{eq:ic2}.
\end{lemma}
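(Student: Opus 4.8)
The plan is to read Lemma~\ref{lem:ic} as a one-shot-deviation principle for the agent's sequential reporting problem, using Lemmas~\ref{lem:ic1} and~\ref{lem:ic2} to isolate the ``single-round'' content of Eq.~\eqref{eq:ic2}. By Lemma~\ref{lem:ic2}, the LP variables satisfy Eq.~\eqref{eq:ic2} exactly when $\ua^M(h, s) \ge \ua^{M, r_{h, s, s'}}(h, s)$ for all $s'$ and all history-state pairs $(h, s)$ that $M$ can be executed on --- that is, $M$ is immune to every \emph{single-round} misreport $r_{h, s, s'}$. So Lemma~\ref{lem:ic} reduces to the equivalence: $M$ is IC if and only if $M$ is immune to every single-round misreport. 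The forward direction is straightforward; the converse (one-shot deviations suffice) is the substantive step.

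\emph{Forward direction.} Suppose $M$ is IC. Each $r_{h, s, s'}$ is a particular reporting strategy, so $\ua^M(\emptyset) \ge \ua^{M, r_{h, s, s'}}(\emptyset)$. Since $r_{h, s, s'}$ reports truthfully at every node other than $(h, s)$, both $r_{h, s, s'}$ and truthful reporting reach $(h, s)$ with the same probability, and from every other node of the same depth they induce identical continuations; hence $\ua^{M, r_{h, s, s'}}(\emptyset) - \ua^M(\emptyset)$ equals that probability times $\big(\ua^{M, r_{h, s, s'}}(h, s) - \ua^M(h, s)\big)$. As the probability is positive at a reachable $(h, s)$, IC forces the onward inequality, which by Lemma~\ref{lem:ic2} is Eq.~\eqref{eq:ic2}.

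\emph{Converse.} Assume $M$ is immune to every single-round misreport, and fix an arbitrary reporting strategy $r$. Form the chain of hybrid strategies $r = r^{(0)}, r^{(1)}, \dots, r^{(T)}$, where $r^{(k)}$ reports truthfully in the last $k$ rounds and copies $r$ before (so $r^{(T)}$ is truthful reporting); I aim to show $\ua^{M, r^{(k)}}(\emptyset) \ge \ua^{M, r^{(k-1)}}(\emptyset)$, which telescopes to $\ua^M(\emptyset) \ge \ua^{M, r}(\emptyset)$ and hence IC. The strategies $r^{(k)}$ and $r^{(k-1)}$ agree before round $t := T - k + 1$, so they induce the same distribution over the actual length-$(t-1)$ history-state pair $(h, s)$ reached; condition on it. At $(h, s)$ the mechanism's internal state is the reported prefix $g := r(h)$. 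The two supporting observations are: (i) once the agent reports truthfully forever, its onward utility is a function of the current internal state and current true state alone and coincides with $\ua^M$ evaluated at that pair; and (ii) ``run $r$ for one more round, then report truthfully'' unwinds exactly to the single-round misreport $r_{g, s, \hat s}$ with $\hat s := r(h, s)$. By (i) the onward utility from $(h, s)$ is $\ua^M(g, s)$ under $r^{(k)}$, and by (i)--(ii) it is $\ua^{M, r_{g, s, \hat s}}(g, s)$ under $r^{(k-1)}$. Taking the difference, multiplying by the nonnegative reaching probabilities, and summing over $(h, s)$ reduces $\ua^{M, r^{(k)}}(\emptyset) - \ua^{M, r^{(k-1)}}(\emptyset) \ge 0$ to the single-round inequality $\ua^M(g, s) \ge \ua^{M, r_{g, s, \hat s}}(g, s)$ at the relevant pairs.

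The main obstacle is legitimizing this last inequality: $g = r(h)$ is a \emph{reported} prefix, so it records the agent's earlier misreports and may even contain transitions of probability zero, which means $(g, s)$ need not be reachable under truthful play in the usual sense. What saves the argument is that $g$ is nonetheless built entirely from actions that $\pi$ plays with positive probability, so the flow variable $z(g, s)$ is positive --- and this is precisely the set of pairs on which Eq.~\eqref{eq:ic2} is nonvacuous, since the extended transition operator $\etrans$ and the phantom probabilities in the flow and IC constraints were introduced so that Eq.~\eqref{eq:ic2} governs $M$'s behavior on every pair $M$ can be driven to, not merely those realized under truthful reporting. Pinning down this correspondence, and carrying out the unwinding in observation~(ii) carefully along the probability-zero branches, is where the real work lies; the decomposition over length-$(t-1)$ pairs and the telescoping are then routine.
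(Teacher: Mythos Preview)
Your converse --- the hybrid argument --- is exactly the paper's approach: the paper writes $r|_{\ge t}$, $r|_{< t}$, $r|_{= t}$ (your $r^{(k)}$ is $r|_{< T-k+1}$) and peels off the misreport one layer at a time using Lemma~\ref{lem:ic2}. You are in fact more explicit than the paper about the point that the single-round inequality must be invoked at the \emph{reported} prefix $g = r(h)$, and about why the phantom probabilities make such $(g, s)$ ``reachable'' in the sense of Lemma~\ref{lem:ic2}; the paper handles this by silently evaluating $\ua^{M, \cdot}(r(h), s)$ at reported histories.

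Your forward direction, however, has a gap. You assert that a reachable $(h, s)$ is reached with positive probability under truthful reporting, but ``reachable'' in Lemma~\ref{lem:ic2} means only that the action chain in $h$ is playable by $M$ (i.e.\ $x(h_p, s_p, a_p) > 0$), which --- precisely because of the phantom flow through $\etrans$ --- can hold even when some state transition in $h$ has $\trans$-probability zero. At such $(h, s)$ the root-level inequality $\ua^M(\emptyset) \ge \ua^{M, r_{h,s,s'}}(\emptyset)$ is $0 \ge 0$ and says nothing about the onward utilities, so you cannot conclude $\ua^M(h, s) \ge \ua^{M, r_{h,s,s'}}(h, s)$ there. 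The paper's own proof in fact establishes only the converse (Eq.~\eqref{eq:ic2} $\Rightarrow$ IC), which is the direction needed to certify that the LP optimum is an IC mechanism.
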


\paragraph{IR constraints, feasible actions, and feasible payments.}
These constraints are straightforward given the correspondence between the LP variables and the mechanism that we have discussed above.
Note that while Eq.~\eqref{eq:ir} is for dynamic IR (i.e., the agent has no incentive to leave the mechanism at any point) and Eq.~\eqref{eq:payments} is for nonnegative payments, it is easy to replace them with similar constraints that correspond to overall IR or no payments.
See Section~\ref{sec:customizing} for more details.

\paragraph{Optimality of LP solution.}
Given the above facts, we are ready to state and prove the main result of the paper.

\begin{theorem}
\label{thm:short}
    There is an algorithm which computes an optimal IC and (optionally) IR dynamic mechanism, with or without payments, in time $O(\mathrm{poly}(|\cS|^T, |\cA|^T, L))$, where $L$ is the number of bits required to encode each of the input parameters.
    In particular, when $T$ is constant, the algorithm runs in polynomial time.
\end{theorem}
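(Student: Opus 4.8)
The plan is to establish that the linear program in Figure~\ref{fig:lp} is an exact encoding of the space of IC, dynamic-IR mechanisms with nonnegative payments, with the objective equal to the principal's overall utility, and then to conclude by appealing to polynomial-time exact LP solvability together with a crude size bound. For the first half of the correspondence I would take an arbitrary feasible solution $(x, y, z, u, u(\cdot,\cdot,\cdot))$ of the LP and form the mechanism $M = (\pi, p)$ by the recipe in the preceding text: $\pi(h,s,a) = x(h,s,a)/z(h,s)$ and $p(h,s) = y(h,s)/z(h,s)$ when $z(h,s) > 0$, and anything feasible otherwise. Constraints~\eqref{eq:flow1}, \eqref{eq:actions}, \eqref{eq:payments} make $M$ a legitimate mechanism with nonnegative payments; Lemma~\ref{lem:obj} identifies the objective~\eqref{eq:obj} with $\up^M(\emptyset)$; Lemma~\ref{lem:util} together with~\eqref{eq:ir} gives dynamic IR; and Lemma~\ref{lem:ic} gives IC. So every feasible LP point produces an IC, dynamic-IR, nonnegative-payment mechanism whose principal utility is exactly the objective value at that point.

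For the converse direction I would start from an arbitrary IC, dynamic-IR mechanism $M$ with nonnegative payments and build LP variables by setting $z(h,s)$ to be the probability of reaching $(h,s)$ under truthful reporting, except that each zero-probability transition is replaced by the phantom probability $1$ supplied by $\etrans$ (so $z$ propagates multiplicatively down the \emph{extended} tree), and then $x(h,s,a) = z(h,s)\,\pi(h,s,a)$, $y(h,s) = z(h,s)\,p(h,s)$, $u(h,s) = z(h,s)\,\ua^M(h,s)$, and $u(h,s,s') = z(h,s')\,\ua^{M, r_{h,s,s'}}(h,s)$. The flow constraints~\eqref{eq:flow1}--\eqref{eq:flow3} then hold by construction; \eqref{eq:util} and~\eqref{eq:ic1} hold by exactly the inductive identities used to prove Lemmas~\ref{lem:util} and~\ref{lem:ic1}; \eqref{eq:ir} is dynamic IR of $M$; \eqref{eq:actions}--\eqref{eq:payments} are feasibility of $M$; and~\eqref{eq:ic2} holds because an IC mechanism in particular admits no profitable single-step deviation, which is the ``only if'' direction of Lemma~\ref{lem:ic} applied to the mechanism reconstructed from these variables (which, by the remarks following the LP, coincides with $M$ on every reachable history-state pair). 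Hence the objective attained by $M$'s variables equals $\up^M(\emptyset)$.

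Combining the two directions, the LP optimum equals $\sup_M \up^M(\emptyset)$ over IC, dynamic-IR, nonnegative-payment mechanisms, and since the $\etrans$ values are at most $1$ (bounding $x$ and $z$) and the IR constraints bound the payments, the feasible region is a bounded polytope, so the optimum is attained at a vertex; reading off $M$ from that vertex yields the optimal mechanism (and if the LP is infeasible, no such mechanism exists). For the running time, $|\cH| = \sum_{t=0}^{T-1}(|\cS|\,|\cA|)^t = O((|\cS|\,|\cA|)^{T-1})$, so the LP has $O(\mathrm{poly}(|\cS|^T, |\cA|^T))$ variables and constraints, and every coefficient — including the ratios $\trans/\etrans$ appearing in~\eqref{eq:ic1}--\eqref{eq:ic2} — is a rational of bit-length $O(L)$; solving it exactly takes time polynomial in these quantities, i.e.\ $O(\mathrm{poly}(|\cS|^T, |\cA|^T, L))$, which is polynomial when $T$ is constant. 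The ``without payments'', ``overall IR'', and ``no IR'' variants follow by the substitutions of Section~\ref{sec:customizing} (respectively $y \equiv 0$, restricting~\eqref{eq:ir} to $h = \emptyset$, and dropping~\eqref{eq:ir}), the last making the objective unbounded exactly as the theorem claims.

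The step I expect to be the main obstacle is the bookkeeping around probability-zero history-state pairs in the converse direction: because the agent may deviate into subtrees that truthful play never reaches, IC must still be enforced there, which is precisely why $z$ has to be defined through the extended operator $\etrans$ rather than $\trans$, and why the reconstructed mechanism need only agree with $M$ on reachable pairs. Making the phantom-probability scalings line up consistently across~\eqref{eq:flow3}, \eqref{eq:ic1} and~\eqref{eq:ic2} — so that, e.g., $u(h,s,s')$ is scaled by $z(h,s')$ rather than $z(h,s)$ and the ratio in~\eqref{eq:ic2} converts correctly — is the delicate part, although most of this is already absorbed into Lemmas~\ref{lem:ic1} and~\ref{lem:ic2}, so the remaining work is mainly to verify that the variables I construct from $M$ indeed satisfy those lemmas' hypotheses.
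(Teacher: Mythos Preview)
Your proposal is correct and follows essentially the same approach as the paper: establish the bijection (modulo unreachable pairs) between feasible LP points and IC/IR mechanisms via Lemmas~\ref{lem:obj}--\ref{lem:ic}, then appeal to polynomial-time LP solvability with the size bound $O(|\cS|^{T+1}|\cA|^T)$ on variables and constraints. You spell out the two directions of the correspondence and the boundedness of the feasible region more explicitly than the paper does, but the argument is the same; one small slip is that overall IR should replace~\eqref{eq:ir} by the single constraint $\sum_{s} u(\emptyset, s) \ge 0$, not by the per-state constraints $u(\emptyset, s) \ge 0$.
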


\subsection{Customizing the LP Formulation.}
\label{sec:customizing}

The LP formulation in Figure~\ref{fig:lp} allows for nonnegative payments, assumes that the principal cares about payments as much as the agent, and enforces dynamic IR constraints.
As mentioned above, one can customize all these components by modifying the corresponding parts of the LP formulation.
Below we discuss several ways of customization.
\begin{itemize}
    \item {\bf Unequal valuations for payments}: in the case where the principal has utility $c$ for one unit of payment (whereas without loss of generality the agent has utility $1$), one may replace the objective function (Eq.~\eqref{eq:obj}) with
    \[
        \max \sum_{h \in \cH, s \in \cS: (h, s)\text{ is feasible}} \left(\sum_{a \in \cA} \vp_{|h| + 1}(s, a) \cdot x(h, s, a) + c \cdot y(h, s)\right).
    \]
    Note that our formulation works only when the principal cares linearly about payments.
    Notably, the principal may not care about payments at all (as in the case of paying the agent in ``brownie points''), or even dislike payments made by the agent (as in the case where the agent is asked to expend useless effort or ``burn money'' and the principal cares in part about the resulting loss of welfare).
    \item {\bf No payments}: to forbid payments in the mechanism, one can simply replace Eq.~\eqref{eq:payments} with
    \[
        y(h, s) = 0,\ \forall h \in \cH, s \in \cS.
    \]
    \item {\bf Feasible intervals of payments}: more generally, one may wish to specify a feasible interval $[a_{h, s}, b_{h, s}]$ for the payment at each history-state pair $(h, s)$ such that $a_{h, s} \le p(h, s) \le b_{h, s}$, which subsumes both nonnegative payments and no payments as special cases.
    This can be done by replacing Eq.~\eqref{eq:payments} with
    \[
        a_{h, s} \cdot z(h, s) \le y(h, s) \le b_{h, s} \cdot z(h, s),\ \forall h \in \cH, s \in \cS.
    \]
    \item {\bf Overall/no IR}: when the agent can choose whether to participate in the mechanism, but cannot leave halfway (corresponding to an overall IR constraint), one can replace Eq.~\eqref{eq:ir} with
    \[
        \sum_{s \in \cS} u(\emptyset, s) \ge 0.
    \]
    Also, when leaving the mechanism is not an option for the agent from the very beginning (corresponding to no IR constraint), one may remove IR constraints simply by removing Eq.~\eqref{eq:ir}.
    \item {\bf Discount factors}: to accommodate the case where the agent has a discount factor $0 \le \delta < 1$, one can modify the LP formulation in the following way:
    \begin{itemize}
        \item Replace Eq.~\eqref{eq:util} with
        \[
            u(h, s) = \sum_{h' \in \cH, s' \in \cS: (h, s) \subseteq (h', s')} \delta^{|h'| - |h|} \cdot \left(\sum_{a \in \cA} \va_{|h'| + 1}(s', a) \cdot x(h', s', a) - y(h', s') \right),\ \forall h \in \cH, s \in \cS.
        \]
        \item Replace Eq.~\eqref{eq:ic1} with
        \begin{align*}
            u(h, s, s') & = \sum_{a \in \cA} \va_{|h| + 1}(s, a) \cdot x(h, s', a) - y(h, s') \\
            \phantom{u(h, s, s')} & \phantom{\ = \ } + \delta \cdot \sum_{a \in \cA, s'' \in \cS} \frac{\trans_{|h| + 1}(s, a, s'')}{\etrans_{|h| + 1}(s', a, s'')} \cdot u(h + (s', a), s''),\ \forall h \in \cH, s, s' \in \cS \\
        \end{align*}
    \end{itemize}
    \item {\bf Deterministic mechanisms}: the problem of computing an optimal deterministic mechanism is $\mathsf{NP}$-hard even in static environments \cite{conitzer2002complexity,conitzer2004self}.
    Nevertheless, given our LP formulation, one can restrict the mechanism to be deterministic by introducing Boolean variables, resulting in a mixed integer LP.  While integer LPs are hard to solve in a worst-case sense, real-world problems often admit certain structures which can be exploited by commercial solvers such as CPLEX and Gurobi.
    To be specific, we introduce a Boolean variable $c(h, s, a)$ which controls $x(h, s, a)$ for all $h \in \cH$, $s \in \cS$, and $a \in \cA$, and ensures that fixing $h$ and $s$, $x(h, s, a)$ can be positive for at most one action $a \in \cA$.
    This is implemented by the following constraints (in addition to the existing ones):
    \begin{align*}
        & x(h, s, a) \le c(h, s, a) && \forall h \in \cH, s \in \cS, a \in \cA \\
        & \sum_{a \in \cA} c(h, s, a) = 1 && \forall h \in \cH, s \in \cS \\
        & c(h, s, a) \in \{0, 1\} && \forall h \in \cH, s \in \cS, a \in \cA.
    \end{align*}
\end{itemize}
We also remark that the above discussion is non-exhaustive: one can impose richer restrictions by modifying the LP formulation in other linear ways, and/or combining the above modifications.

\section{The Case of Myopic Agents: Characterization and Faster Algorithm}
\label{sec:myopic}

In this section, we consider a special case of the problem of computing optimal dynamic mechanisms, namely the case where the agent is myopic, or, equivalently, the agent has a discount factor of $0$.
While our LP-based algorithm still applies, as we will see below, optimal mechanisms for myopic agents enjoy a succinct representation in this case, which also enables a faster algorithm that scales only linearly in the time horizon $T$.

\paragraph{Myopic agents.}
The utility $\ua^M$ of a myopic agent under mechanism $M$ is such that
\[
    \ua^M(h, s) = \sum_a \pi(h, s, a) \cdot \va_{|h| + 1}(s, a) - p(h, s),
\]
where $\ua^M(h, s) = 0$ for all $h \in \cH_T$ and $s \in \cS$.
Given a reporting strategy $r$, the utility $\ua^{M, r}$ of the agent under mechanism $M$ and reporting strategy $r$ is
\[
    \ua^{M, r}(h, s) = \sum_a \pi(r(h), r(h, s), a) \cdot \va_{|h| + 1}(s, a) - p(r(h), r(h, s)).
\]
$M$ is IC if and only if for all $h \in \cH$ and $s \in \cS$, there are no future reporting strategies that lead to better utility, i.e., for every reporting strategy $r$ where $r(h', s') = s'$ whenever $|h'| < |h|$,
\[
    \ua^M(h, s) \ge \ua^{M, r}(h, s).
\]
Note that since the agent is myopic, it is insufficient to simply require $\ua^M(\emptyset) \ge \ua^{M, r}(\emptyset)$.
Also, it is necessary to restrict misreporting to the future, since otherwise the agent would be allowed and sometimes incentivized to change the past, leading to unrealistically strong IC requirements.
Again, since the revelation principle holds, we focus only on IC mechanisms.

\subsection{Characterization of Optimal Mechanisms}

We first show that when the agent is myopic, without loss of generality, the actions and payments specified by an optimal mechanism depend only on the time, the previous state, the previous action and the current state (we call such a mechanism a {\em succinct mechanism}), instead of the entire history-state pair.

\begin{lemma}
\label{lem:characterization}
    Fix a dynamic environment.
    When the agent is myopic, for any IC mechanism $M = (\pi, p)$, there is another IC mechanism $M' = (\pi', p')$ (which is IR whenever $M$ is) such that
    \begin{itemize}
        \item $\up^{M'}(\emptyset) \ge \up^M(\emptyset)$, and
        \item for all $h \in \cH$, $s \in \cS$, $\pi'$ and $p'$ depend only on $|h|$, $s_p$, $a_p$ and $s$, where $(s_p, a_p) = \last(h)$.
    \end{itemize}
    Moreover, the above is true regardless of whether payments are allowed, or which IR constraints are required.
\end{lemma}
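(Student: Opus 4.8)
The plan is to prove the lemma by backward induction over the time steps, rewriting $M$ one level at a time. The first ingredient is a local reformulation of incentive-compatibility for a myopic agent. Since for such an agent $\ua^{M,r}(h,s)$ is a single-step quantity and, when $r$ is truthful on every proper prefix of $h$, the reported history $r(h)$ coincides with $h$, the only deviations that matter at $(h,s)$ are single misreports of some $s'$ in place of $s$. Hence $M$ is IC if and only if, for all $h\in\cH$ and $s,s'\in\cS$,
\[
    \sum_a \pi(h,s,a)\,\va_{|h|+1}(s,a) - p(h,s) \;\ge\; \sum_a \pi(h,s',a)\,\va_{|h|+1}(s,a) - p(h,s').
\]
The key point is that this constraint --- together with the IR variants I need (dynamic IR adds $\sum_a\pi(h,s,a)\va_{|h|+1}(s,a)-p(h,s)\ge 0$ for all $s$; overall IR is $\ua^M(\emptyset)\ge 0$) --- involves only the step-$(|h|+1)$ sub-mechanism $m_h:=(\pi(h,\cdot),p(h,\cdot))$, and the resulting feasible set of sub-mechanisms depends on $h$ only through $|h|$. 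Write $\mathcal{F}_\tau$ for this polytope when $|h|=\tau-1$.

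For the induction, suppose $M$ has already been rewritten so that at every step $\tau'>\tau$ the sub-mechanism $m_{h'}$ depends only on $(|h'|,\last(h'),\cdot)$. Then for $|h|=\tau-1$ the onward principal utility unrolls to $\up^M(h,s)=\sum_a\pi(h,s,a)R_\tau(s,a)+p(h,s)$, where $R_\tau(s,a):=\vp_\tau(s,a)+\sum_{s'}\trans_\tau(s,a,s')W_\tau(s,a,s')$ and $W_\tau(s,a,s')$ is the onward utility at step $\tau+1$ after last pair $(s,a)$ and current state $s'$ --- history-independent precisely by the succinctness hypothesis. Conditioning on the length-$(\tau-1)$ prefix gives $\up^M(\emptyset)=C+\sum_{h:\,|h|=\tau-1}q(h)\sum_s\trans_{\tau-1}(\last(h),s)\,\up^M(h,s)$, where $C$ and the reach-probabilities $q(h)=\Pr[h_{\tau-1}=h]$ depend only on the behavior at steps $1,\dots,\tau-1$ and so are unaffected by any change to the step-$\tau$ sub-mechanisms. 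Now group the length-$(\tau-1)$ histories by $\last(h)=(s_p,a_p)$, and for each group let $m^*_{s_p,a_p}$ maximize $\sum_s\trans_{\tau-1}(s_p,a_p,s)\big(\sum_a\pi_m(s,a)R_\tau(s,a)+p_m(s)\big)$ over $m\in\mathcal{F}_\tau$ --- a linear program; when its value is $+\infty$ one instead takes the best of the finitely many $m_h$ already present in the group, which is all the lemma needs. Replacing every $m_h$ by $m^*_{\last(h)}$ keeps $M$ IC and IR because $m^*_{s_p,a_p}\in\mathcal{F}_\tau$; it does not decrease any group's contribution, since a maximum is at least the corresponding $q(h)$-weighted average of the $\up^M(h,s)$ values; and it makes $M$ depend only on $(|\cdot|,\last(\cdot),\cdot)$ from step $\tau$ on. Iterating from $\tau=T$ down to $\tau=2$ produces the desired $M'$; step $\tau=1$ needs nothing, as $\emptyset$ is the unique length-$0$ history.

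Finally I check the IR claim. Dynamic IR: whenever it is imposed I bake $\sum_a\pi(s,a)\va_\tau(s,a)-p(s)\ge 0$ into $\mathcal{F}_\tau$, so every $m^*$ obeys it and $\ua^{M'}(h,s)\ge 0$ everywhere. Overall IR survives for a different reason: for a myopic agent $\ua^M(\emptyset)=\sum_s\trans_0(s)\big(\sum_a\pi(\emptyset,s,a)\va_1(s,a)-p(\emptyset,s)\big)$ depends only on $m_\emptyset$, and the induction never touches the step-$1$ sub-mechanism, so $\ua^{M'}(\emptyset)=\ua^M(\emptyset)$. I expect the only real friction in writing this up to be bookkeeping --- stating the induction invariant cleanly, verifying that $C$ and $q(h)$ are genuinely untouched at each stage, and dealing with the $h=\emptyset$ and zero-probability-history corner cases --- but the last of these is largely defused by the local-IC characterization, whose constraints are imposed uniformly over all histories of a given length, whether or not they are reachable.
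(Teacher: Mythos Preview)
Your proposal is correct and follows essentially the same strategy as the paper's proof: both exploit that, for a myopic agent, IC (and dynamic IR) reduce to \emph{local} constraints on the one-step sub-mechanism $m_h=(\pi(h,\cdot),p(h,\cdot))$ that depend on $h$ only through $|h|$, and then use backward induction to replace each $m_h$ by a single representative per $(\,|h|,\last(h)\,)$-group without hurting the principal.

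The only noteworthy difference is in the mechanics of the replacement. The paper defines $M'$ in one shot: for each $(t,s_p,a_p)$ it selects the \emph{existing} history $h^*(t,s_p,a_p)$ that maximizes the $\trans_{t-1}(s_p,a_p,\cdot)$-weighted onward utility under the \emph{original} $M$, copies $m_{h^*}$ there, and afterwards proves $\sum_s \trans_{t-1}(s_p,a_p,s)\,\up^{M'}(h,s)\ge \sum_s \trans_{t-1}(s_p,a_p,s)\,\up^M(h,s)$ by backward induction. You instead rewrite one layer at a time against the already-succinct tail and optimize over the whole feasible polytope $\mathcal{F}_\tau$ (falling back to the best existing $m_h$ if the LP is unbounded). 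The paper's version is slightly cleaner --- no LP, no unboundedness case, and IC/IR/payment constraints are automatically inherited since every $m_{h^*}$ comes from $M$ --- while yours can produce a strictly better $M'$ and dovetails directly with Algorithm~\ref{alg:myopic}. Your handling of overall IR (noting it depends only on $m_\emptyset$, which you never touch) is also correct and slightly more explicit than the paper's.
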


\subsection{Faster Algorithm for Myopic Agents}

Based on the above characterization, we present below a faster algorithm for computing an optimal mechanism in the face of a myopic agent.
In particular, the time complexity of this algorithm depends only linearly on the time horizon $T$, making it feasible for dynamic environments with a long time horizon.
This is in contrast with the case of patient agents, for which, as we have seen, the long-horizon problem is hard to approximate.

To improve readability, we use the following shorthand notation for succinct mechanisms.
For a succinct mechanism $M = (\pi, p)$, for any $h \in \cH$ and $s \in \cS$, let $\pi(t,  s_p, a_p, s) = \pi(h, s)$ be the action policy at $(h, s)$, and $p(t, s_p, a_p, s) = p(h, s)$ be the payment function, where $(s_p, a_p) = \last(h)$ and $t = |h| + 1$.
Also, observe that the principal's onward utility at any history-state pair $(h, s)$ depends only on the previous state $s_p$, the previous action $a_p$, and the current state $s$.
In such cases, we also denote this utility by $\up^M(t, s_p, a_p, s) = \up^M(h, s)$.

\begin{algorithm}[t]
\SetAlgoNoLine
\KwIn{Time horizon $T$, transition probabilities $\{\trans_t\}_{t \in [T]}$, principal's valuation functions $\{\vp_t\}_{t \in [T]}$, agent's valuation functions $\{\va_t\}_{t \in [T]}$.}
\KwOut{An optimal IC (for a myopic agent) mechanism $M = (\pi, p)$.}
    \For{$t = T, T - 1, \dots, 1$}{
        \For{$s \in \cS$, $a \in \cA$}{
            let $u(t, s, a) \gets \vp_t(s, a) + \sum_{s' \in \cS} \trans_t(s, a, s') \cdot \up^M(t + 1, s, a, s')$\;
            \tcc{the above operation is well-defined, in particular because $\up^M(t + 1, s, a, s')$ depends only on the part of $M$ that has already been computed}
        }
        \For{$s_p \in \cS, a_p \in \cA$}{
            let $(\pi', p') \gets \optstatmech(\cS, \cA, \{\trans_{t - 1}(s_p, a_p, s)\}_s, \{u(t, s, a)\}_{s, a}, \{\va_t(s, a)\}_{s, a})$\;
            \tcc{call $\optstatmech$ to compute an optimal static mechanism $(\pi', p')$, in a static environment with type space $\cS$, action space $\cA$, population distribution $\{\trans_{t - 1}(s_p, a_p, s)\}_s$, principal's utility function $\{u(t, s, a)\}_{s, a}$, and agent's utility function $\{\va_t(s, a)\}_{s, a}$}
            \For{$s \in \cS$}{
                let $\pi(t, s_p, a_p, s) \gets \pi'(s)$, and $p(t, s_p, a_p, s) \gets p'(s)$\;
            }
        }
    }
    return $M = (\pi, p)$\;
\caption{Algorithm for computing an optimal mechanism against a myopic agent.}
\label{alg:myopic}
\end{algorithm}

The full algorithm is given as Algorithm~\ref{alg:myopic}.
It uses as a subroutine an algorithm $\optstatmech$ which computes an optimal IC (and optionally IR) mechanism in static environments, with or without payments.
It is known that such an algorithm can be implemented using linear programming, and in some cases in more efficient ways \cite{conitzer2002complexity,conitzer2006computing,zhang2021automated}.
Algorithm~\ref{alg:myopic} proceeds in an inductive fashion, building a succinct mechanism backwards, one layer at a time.
It repeatedly solves the problem of maximizing the principal's expected onward utility over the current state $s$, given the previous state $s_p$ and the previous action $a_p$.
Since $s_p$ and $a_p$ together induce a roll-in distribution over the state space, this problem can be reduced to computing an optimal static mechanism, where the valuation function of the principal depends on the optimal mechanism in the following layers.
This can then be solved by calling $\optstatmech$, the algorithm for computing an optimal static mechanism.
Below we state and prove the correctness and computational efficiency of Algorithm~\ref{alg:myopic}.

\begin{theorem}
\label{thm:myopic}
    When the agent is myopic, Algorithm~\ref{alg:myopic} computes an optimal IC and (optionally) IR dynamic mechanism, with or without payments, in time
    \[
        O(T |\cS| |\cA| \cdot T_\mathrm{stat}(|\cS|, |\cA|, L)) = O(T \cdot \mathrm{poly}(|\cS|, |\cA|, L)),
    \]
    where $T_\mathrm{stat}$ is the time complexity of $\optstatmech$, and $L$ is the number of bits required to encode each of the input parameters.
\end{theorem}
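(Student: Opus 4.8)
The plan is to use Lemma~\ref{lem:characterization} to restrict attention to succinct mechanisms, observe that a succinct mechanism decomposes into a family of independent static mechanism-design instances indexed by $(t, s_p, a_p)$, and then prove by backward induction on $t$ that Algorithm~\ref{alg:myopic} solves each of these instances optimally. Since Lemma~\ref{lem:characterization} guarantees that some succinct mechanism is optimal among all IC (and, if required, IR) mechanisms, it suffices to show that the algorithm outputs the best such succinct mechanism.

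The first step is the decomposition. Fix a succinct mechanism $M = (\pi, p)$; by the observation preceding the algorithm, $\up^M(h, s)$ depends only on $(|h| + 1, s_p, a_p, s)$ with $(s_p, a_p) = \last(h)$, written $\up^M(t, s_p, a_p, s)$. Put $u_M(t, s, a) = \vp_t(s, a) + \sum_{s'} \trans_t(s, a, s') \cdot \up^M(t + 1, s, a, s')$, which depends only on the restriction of $M$ to layers $t + 1, \dots, T$ because $\trans_t(s, a, \cdot)$ is fixed. Then for every $(t, s_p, a_p)$,
\[
    \sum_{s} \trans_{t - 1}(s_p, a_p, s) \cdot \up^M(t, s_p, a_p, s) = \sum_{s} \trans_{t - 1}(s_p, a_p, s) \left( \sum_{a} \pi(t, s_p, a_p, s, a) \cdot u_M(t, s, a) + p(t, s_p, a_p, s) \right),
\]
which is exactly the principal's objective in a static instance with types $\cS$, actions $\cA$, type distribution $\trans_{t - 1}(s_p, a_p, \cdot)$, principal utility $u_M(t, \cdot, \cdot)$, and agent utility $\va_t(\cdot, \cdot)$ --- the instance handed to $\optstatmech$. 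Because the agent is myopic, the dynamic IC inequality at any history-state pair with $|h| + 1 = t$, $\last(h) = (s_p, a_p)$, true state $s$, and report $s'$ reduces to $\sum_a \pi(t, s_p, a_p, s, a) \va_t(s, a) - p(t, s_p, a_p, s) \ge \sum_a \pi(t, s_p, a_p, s', a) \va_t(s, a) - p(t, s_p, a_p, s')$, i.e.\ static IC of that same instance; dynamic IR likewise decomposes into static IR of each instance, and overall IR into overall IR of the $t = 1$ instance only (handled by passing the appropriate IR flag to $\optstatmech$ at the first layer). Hence a succinct mechanism is IC (resp.\ IR) if and only if each of its $(t, s_p, a_p)$-restrictions is IC (resp.\ IR) as a static mechanism, and the restrictions may be chosen independently.

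Optimality then follows by backward induction on $t$, with inductive hypothesis: after the iteration for layer $t$, the partial mechanism built so far attains, for every $(s_p, a_p)$ simultaneously, the maximum of $\sum_s \trans_{t - 1}(s_p, a_p, s) \cdot \up^M(t, s_p, a_p, s)$ over all succinct IC/IR mechanisms. The base case $t = T + 1$ is vacuous (the boundary condition makes $\up^M \equiv 0$). For the step: applying the hypothesis at layer $t + 1$ with previous pair $(s, a)$, the already-built future attains, for every $(s, a)$ at once, the maximum of $\sum_{s'} \trans_t(s, a, s') \cdot \up^M(t + 1, s, a, s')$; hence the quantities $u(t, s, a) = \vp_t(s, a) + \sum_{s'} \trans_t(s, a, s') \up^M(t + 1, s, a, s')$ that the algorithm computes are the largest simultaneously achievable. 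The displayed layer-$t$ objective is nondecreasing in each $u(t, s, a)$, so over all (future, layer-$t$) choices its maximum is at most its maximum over layer-$t$ choices with the $u$-values fixed to these optima --- and that maximum is exactly what $\optstatmech$ returns for each $(s_p, a_p)$. Since the resulting combined mechanism is succinct and (by the decomposition) IC/IR, it attains the bound for every $(s_p, a_p)$, establishing the hypothesis at $t$. Taking $t = 1$ with the dummy previous pair (for which $\trans_0(s_p, a_p, \cdot) = \trans_0$) and using $\up^M(\emptyset) = \sum_s \trans_0(s) \up^M(1, s_p, a_p, s)$ shows the returned $M$ is optimal among succinct mechanisms, hence optimal overall by Lemma~\ref{lem:characterization}.

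For the running time: the outer loop runs $T$ times; computing all $u(t, s, a)$ costs $O(|\cS|^2 |\cA|)$; the inner loop over $(s_p, a_p)$ runs $|\cS| |\cA|$ times, each invoking $\optstatmech$ once at cost $T_{\mathrm{stat}}(|\cS|, |\cA|, L)$ plus $O(|\cS|)$ bookkeeping. As $T_{\mathrm{stat}}$ dominates $|\cS|$, each layer costs $O(|\cS| |\cA| \cdot T_{\mathrm{stat}}(|\cS|, |\cA|, L))$, so the total is $O(T |\cS| |\cA| \cdot T_{\mathrm{stat}}(|\cS|, |\cA|, L)) = O(T \cdot \mathrm{poly}(|\cS|, |\cA|, L))$ since $\optstatmech$ runs in polynomial time. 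I expect the main obstacle to be getting the decomposition exactly right: verifying that the myopic agent's full multi-step IC requirement really does collapse, layer by layer, into the single-step static IC constraints of the independent subproblems, and that neither the IR variants nor the $t = 1$ dummy previous state-action smuggle in cross-layer coupling. The optimality induction and the time count are then routine.
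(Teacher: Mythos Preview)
Your proposal is correct and follows essentially the same approach as the paper: both invoke Lemma~\ref{lem:characterization} to restrict to succinct mechanisms, then prove by backward induction on $t$ that the mechanism built by Algorithm~\ref{alg:myopic} simultaneously maximizes $\sum_s \trans_{t-1}(s_p,a_p,s)\,\up^M(t,s_p,a_p,s)$ for every $(s_p,a_p)$, using the optimality of $\optstatmech$ at each layer. Your version is somewhat more explicit than the paper's about why the myopic IC/IR constraints decompose layer-by-layer into independent static instances and about the monotonicity in the $u(t,s,a)$ values, but the argument is the same.
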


\paragraph{Customizing Algorithm~\ref{alg:myopic}.}
We remark that Algorithm~\ref{alg:myopic} can also be customized to allow for unequal valuations of payments, feasible intervals of payments, etc.
Moreover, it can be adapted to compute an optimal deterministic mechanism, by requiring $\optstatmech$ to compute an optimal deterministic static mechanism.
Again, while this is generally hard to compute, for practical purposes, it is reasonable to expect that $\optstatmech$ implemented using commercial mixed integer LP solvers (or in other practically efficient ways) can find an optimal mechanism efficiently.

\section{Infeasibility of Memoryless Mechanisms}
\label{sec:memoryless}

From a planning perspective, automated dynamic mechanism design can be viewed equivalently as planning in MDPs where the current state cannot be directly observed, but instead, has to be reported by a strategic agent whose interest may not align with the planner's.
In particular, when the planner and the agent share the same valuation function, automated dynamic mechanism design degenerates to the classical problem of planning in episodic MDPs with a finite planning horizon.
In the latter problem, it is well known that without loss of generality, any optimal policy depends only on the time and the current state, i.e., it is memoryless.
And moreover, such optimal policies can be computed in polynomial time.
In light of the above facts, the following questions arise naturally: are there (approximately) optimal mechanisms that are also memoryless, and can we find optimal memoryless mechanisms efficiently?
In this section, we give negative answers to both questions, which means memoryless mechanisms are generally infeasible for dynamic environments.
We first show that memoryless mechanisms can be arbitrarily worse than general, history-dependent mechanisms, against both patient and myopic agents.

\begin{theorem}
\label{thm:suboptimality}
    Regardless of whether the agent is myopic, for any $\eps > 0$, there is a dynamic environment where the principal's utility under an optimal memoryless mechanism is at most an $\eps$ fraction of the principal's optimal utility.
\end{theorem}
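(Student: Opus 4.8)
The plan is to construct, for each $\eps > 0$, a small dynamic environment in which a cleverly designed history-dependent mechanism extracts essentially all of some large prize, while \emph{any} memoryless mechanism — one whose action policy and payments depend only on $(t,s)$ — is forced into a configuration that the strategic agent can exploit, collapsing the principal's utility. The key structural idea to exploit is that a memoryless mechanism cannot condition on what was reported (or what action was played) in previous rounds, so it cannot ``reward'' a past truthful report with a later action, nor ``punish'' a past lie; a history-dependent mechanism can. I would build the instance so that the only way to both (a) be incentive-compatible and (b) take the action the principal values is to use exactly such cross-period conditioning.

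Concretely, I would use a time horizon of $T = 2$ (or $3$), two states $\cS = \{s_{\mathrm{good}}, s_{\mathrm{bad}}\}$ whose period-$1$ realization is a coin flip, and two actions. In period $2$ the principal wants to play a ``big payoff'' action $a^*$, worth $1/\eps$ to the principal, but $a^*$ should be allowed only when the \emph{period-$1$} report was truthful; playing $a^*$ indiscriminately must be exploitable by the agent (e.g.\ the agent strictly prefers $a^*$ regardless of the true state, so if $a^*$ is always available the agent just lies in period $1$ whenever that lie is otherwise convenient). A history-dependent mechanism can implement ``play $a^*$ in period $2$ iff the period-$1$ state-action pair is consistent with truthful reporting'' and verify, via Lemma~\ref{lem:ic} (no profitable single-step deviation), that this is IC and collects $\approx 1/\eps$; scaling so the principal's optimal utility is $\Theta(1/\eps)$ while the memoryless value is $O(1)$ gives the claimed ratio. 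I would also choose the period-$1$ payoffs so that the overall/dynamic IR constraints are slack (e.g.\ give the agent a small positive cushion), so the example works under all IR regimes. For the myopic case I would re-examine the same construction under the myopic IC definition from Section~\ref{sec:myopic}; since the obstruction is about cross-period \emph{availability of actions}, not about the agent's patience, the same instance (possibly with payoffs retuned) should still separate memoryless from history-dependent mechanisms, and I would verify this directly against the myopic IC inequalities. Alternatively, for the myopic case one can invoke Lemma~\ref{lem:characterization}: the optimal (history-dependent) mechanism may be taken succinct, i.e.\ depending on $(t, s_p, a_p, s)$, which is strictly more than $(t,s)$, and the instance is designed so this extra dependence is essential.

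The main obstacle is the lower-bound half: proving that \emph{every} memoryless IC mechanism has utility $O(1)$, i.e.\ ruling out clever randomization and payments. The argument I anticipate: fix any memoryless mechanism; its period-$2$ policy $\pi(2,\cdot)$ is a function of the period-$2$ reported state alone. I would show that if this policy puts non-negligible probability on $a^*$ in any state, then combined with the period-$1$ policy it creates a profitable period-$1$ misreport (the agent lies in period $1$ to steer into the state where $a^*$ is likely, since $a^*$ is good for the agent) — here payments cannot help the principal because a period-$1$ payment is also history-independent and the IR/structure of the instance caps how much can be charged; and if instead $\pi(2,\cdot)$ essentially never plays $a^*$, the principal directly forgoes the $1/\eps$ prize. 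Making ``non-negligible'' quantitative — tracking exactly how much the agent's deviation gain scales with the probability mass on $a^*$ and with any payments levied — is the delicate calculation, and I would handle it by a clean case split on $\Pr[\pi(2,\cdot)=a^*]$ being above or below a threshold like $\eps$, bounding the principal's utility by $O(1)$ in each case. Everything else (feasibility and IC of the history-dependent mechanism, the IR checks) is routine given the machinery already developed in Section~\ref{sec:short-horizon}.
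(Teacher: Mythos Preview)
Your high-level intuition --- that history-dependence lets the principal tie a future action to a past report, while memorylessness cannot --- is right, but the concrete plan has a gap that blocks the argument. You want a period-$2$ action $a^*$ that is (i) worth $1/\eps$ to the principal and (ii) strictly preferred by the agent regardless of state. But then a memoryless mechanism can simply play $a^*$ unconditionally at time~$2$: this is trivially IC at time~$2$ (the agent already wants $a^*$), and it collects the full $1/\eps$ for the principal. You say the agent will ``lie in period~$1$ whenever that lie is otherwise convenient,'' yet in a memoryless mechanism a period-$1$ lie does not touch the period-$2$ action, and you have placed the principal's entire payoff at time~$2$ --- so the lie is costless to the principal, and the ``high probability on $a^*$'' branch of your case split does not yield an $O(1)$ bound. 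More broadly, with only two states and two actions the matching-versus-mismatching IC obstruction gives at most a constant-factor gap; a fixed-size instance with one scaled payoff cannot produce an arbitrarily small ratio unless you supply a different channel through which the agent's lie actually destroys the principal's period-$2$ value, and your sketch does not identify one. (Relatedly, ``play $a^*$ iff the period-$1$ report was truthful'' is not something any mechanism can implement; it can only condition on the report itself.)

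The paper's construction is structurally the opposite of yours. It takes $n\ge 1/\eps$ states and $n$ actions and puts the principal's payoff at time~$1$: the principal gets $1$ for matching the action to the true state, the agent gets $1$ for mismatching, and at time~$2$ all states collapse to a single state with zero principal value. The history-dependent mechanism matches at time~$1$ and then, at time~$2$, plays an action (keyed to the time-$1$ report) that the agent likes and the principal is indifferent to --- a \emph{costless} reward that equalizes the agent's total utility across all time-$1$ reports, hence IC with principal utility~$1$. For the memoryless lower bound, the collapsed time-$2$ state forces a fixed time-$2$ action, so time-$1$ IC alone requires $\pi(i,i)\le \pi(j,i)$ for all $j$; averaging gives $\frac{1}{n}\sum_i \pi(i,i)\le \frac{1}{n^2}\sum_{i,j}\pi(j,i)=\frac{1}{n}$. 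The ratio comes from the \emph{number of states}, not from scaling a single payoff. For the myopic case the paper swaps the two periods: all time-$1$ valuations are zero (so the myopic agent truthfully reports the time-$1$ state for free), states persist, and the history-dependent mechanism plays the matching action at time~$2$ using the trusted time-$1$ report while ignoring the time-$2$ report; the memoryless mechanism is then stuck with the same $1/n$ averaging bound at time~$2$.
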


Now we show that on top of the suboptimality, optimal memoryless mechanisms are computationally hard to approximate.
\begin{theorem}
\label{thm:np-hardness}
    Regardless of whether the agent is myopic, it is $\mathsf{NP}$-hard to approximate the principal's maximum utility under memoryless mechanisms within a factor of $7/8 + \eps$ for any $\eps > 0$.
\end{theorem}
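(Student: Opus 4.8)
The plan is to reduce from \textsc{Max-E3-Sat}, for which (H{\aa}stad) it is $\mathsf{NP}$-hard to distinguish satisfiable instances from instances in which no more than a $7/8 + \eps$ fraction of the clauses can be satisfied simultaneously; this is exactly the gap already used for Theorem~\ref{thm:long}, so the goal is to build an environment in which an \emph{optimal} mechanism can be taken to be memoryless, whereupon the $7/8+\eps$ inapproximability for memoryless mechanisms is inherited from the same gap. Structurally, the reason to expect hardness is that the clean LP of Figure~\ref{fig:lp} becomes non-convex once we insist on memorylessness: requiring $x(h,s,a)/z(h,s)$ to be independent of $h$ for fixed $|h|$ and $s$ is a \emph{bilinear} constraint, and the resulting optimization is of the kind into which one can embed Boolean satisfiability.

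Given a \textsc{Max-E3-Sat} instance with variables $x_1,\dots,x_n$ and clauses $c_1,\dots,c_m$, I would build an environment whose only nontrivial decisions are, for each variable, a single binary choice attached to a \emph{unique} (time, state) pair, so that memorylessness by itself forces one global value per variable. A \emph{commitment phase} of $n$ steps does this: at step $i$ the environment sits (with probability not under the mechanism's control) in a dedicated state $v_i$ whose report carries no payoff-relevant information, and the mechanism's action at $(i,v_i)$ plays the role of the bit $x_i$. A memoryless mechanism acts at $(i,v_i)$ by one fixed distribution over $\{0,1\}$, hence realizes a product distribution over assignments, and since the principal's utility will be multilinear in these bits its memoryless optimum is attained at a deterministic assignment $a\in\{0,1\}^n$. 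A \emph{verification phase} of $O(m)$ steps then converts $a$ into principal utility equal to the number of clauses it satisfies: for each clause $c_j=(\ell_{j1}\vee\ell_{j2}\vee\ell_{j3})$ the state keeps the clause index and an $O(1)$-bit tally of whether $c_j$ is satisfied so far, the three relevant bits are recalled, and the principal is credited one unit iff $c_j$ ends up satisfied; the agent's role is kept essentially passive so that IC and both IR notions are easy to maintain. The delicate requirement is that each recall of $x_i$ is routed back through the same state $v_i$ at a time slot determined by $i$, with any deviation leaking into a distinguished ``caught'' state through the transition operator, so that a memoryless mechanism genuinely cannot answer ``$x_i$'' two different ways in two different clause checks.

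Completeness and soundness are then routine. Completeness: given a satisfying assignment, have the mechanism commit to it; the verification phase rewards all $m$ clauses, so the principal's optimum over memoryless (indeed all) mechanisms is at least $m$. Soundness: any memoryless mechanism induces, after the deterministic-rounding argument above, an assignment $a$, and by the consistency enforcement the principal is credited for $c_j$ only when $a$ actually satisfies $c_j$, so the principal's utility is at most the number of clauses $a$ satisfies, which in a no-instance is at most $(7/8+\eps)m$. Since the attainable maximum is $m$, this yields the $7/8+\eps$ factor. The construction is agnostic to the agent's patience: with the agent passive, a myopic agent's best responses are themselves memoryless and still truthful, so the very same environment handles both the myopic and the patient cases.

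I expect the main obstacle to be precisely the amnesia issue in the verification phase. A naive gadget that simply re-asks for $x_{i_1},x_{i_2},x_{i_3}$ clause by clause lets a \emph{different} value of a variable be used for each clause, ``satisfying'' everything and destroying soundness; left to themselves, memoryless mechanisms can only enforce constant-window constraints, and bounded-window CSPs are easy. So the heart of the proof is the gadget that pins the verification-phase handling of each variable to the single commitment-phase state $v_i$ (using the transition dynamics to punish any evasion), together with checking that this gadget (i) uses only polynomially many states and actions, (ii) respects IC and both IR notions against a sophisticated (or myopic) agent, and (iii) realizes ``principal utility $=$ number of clauses satisfied by the committed assignment'' exactly. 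Everything else — gap amplification, scaling so the maximum is $m$, and polynomial-time computability of the reduction — is standard.
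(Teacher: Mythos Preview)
Your proposal has a genuine gap, and it is precisely where you locate it yourself. By making the agent ``essentially passive'' you destroy the source of hardness: if the agent always reports truthfully, the principal observes the state, and computing an optimal memoryless policy is ordinary finite-horizon MDP planning, which is polynomial. So with a passive agent there is nothing $\mathsf{NP}$-hard left to prove. Your consistency gadget cannot rescue this: memorylessness constrains the action only at each $(\text{time},\text{state})$ pair, so routing a recall of $x_i$ through the same state $v_i$ at a \emph{different} time slot than the commitment imposes no tie between the two answers. You correctly flag this amnesia issue as the main obstacle, but the proposed fix does not resolve it, and with a passive agent no polynomial-size gadget can (else MDP planning would be hard).

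The paper's proof takes the opposite tack and is much shorter: it reuses the Theorem~\ref{thm:long} construction and lets the \emph{agent's strategic behavior} enforce consistency, rather than any gadget. For a patient agent the environment is zero-sum, so any dependence of the memoryless mechanism on the reported state is exploited adversarially; hence without loss of generality the optimal memoryless IC mechanism ignores reports and is a function of time alone, i.e., a Boolean assignment, and the $7/8+\eps$ gap from \textsc{Max-Sat} transfers directly. For a myopic agent the paper tweaks the agent's valuations so that one action is strictly preferred in every state; IC then forces the action distribution to be identical across reported states at each time, again collapsing the mechanism to a time-only function. The upshot is that the ``commitment/verification'' scaffolding and the consistency gadget are unnecessary: the right choice of agent valuations makes IC do that work for free.
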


\section{Experimental Results}
\label{sec:exp}

In this section, we present experimental results where our algorithms are applied to synthetic dynamic environments of different characteristics.
The main goals of the experiments are
\begin{itemize}
    \item to provide a proof of concept for the methods proposed in this paper,
    \item to illustrate the necessity of considering incentives when planning in dynamic environments (as opposed to disregarding the agent's valuations and treating the problem simply as an MDP based on the principal's valuations),
    \item to study the effect of cooperation and competition in dynamic mechanism design, and
    \item to understand the difference between patient and myopic agents from the principal's perspective, especially when the parameters of the environment vary.
\end{itemize}

\subsection{Setup of Experiments}

\paragraph{Mechanisms/models of the agent under consideration.}
For each dynamic environment examined, we consider the following quantities from different combinations of mechanisms and models of the agent:
\begin{itemize}
    \item {\bf Na\"ive mechanisms facing a na\"ive agent}: the principal's optimal utility facing a na\"ive agent who always reports truthfully, i.e., the optimal utility when treating the problem simply as an MDP based on the principal's valuations.
    \item {\bf Na\"ive mechanisms facing a patient agent}: the principal's utility, when executing the optimal mechanism/policy for na\"ive agents, facing a strategic agent who is patient.
    \item {\bf Na\"ive mechanisms facing a myopic agent}: the principal's utility, when executing the optimal mechanism/policy for na\"ive agents, facing a strategic agent who is myopic.
    \item {\bf Patient mechanisms facing a patient agent}: the principal's optimal utility facing a strategic agent who is patient.
    \item {\bf Myopic mechanisms facing a myopic agent}: the principal's optimal utility facing a strategic agent who is myopic.
\end{itemize}
For simplicity, payments are not allowed in any of our experiments.

\paragraph{Dynamic environments.}
To manifest the effect of cooperation and competition, we generate synthetic dynamic environments in the following way:
\begin{itemize}
    \item Fix the time horizon $T$, number of states $|\cS|$, number of actions $|\cA|$, and correlation parameter $\eta \in [-1, 1]$ (explained below).
    \item Let the initial distribution $\trans_0$ be a random distribution generated in the following way: for each state $s$, we generate a uniformly random real number $\rand(s)$ between $0$ and $1$, which is proportional to $\trans_0(s)$.
    That is, $\trans_0(s) = \rand(s) / \left(\sum_{s'} \rand(s')\right)$.
    \item For each $t \in [T]$, $s \in \cS$ and $a \in \cA$, we generate the transition distribution $\trans_t(s, a)$ independently in the same way that $\trans_0$ is generated.
    \item For each $t \in [T]$, $s \in \cS$ and $a \in \cA$, let $\vp_t(s, a)$ be an independent, uniformly random real number between $0$ and $1$.
    \item For each $t \in [T]$, $s \in \cS$ and $a \in \cA$, let $\va_t(s, a) = \eta \cdot \vp_t(s, a) + (1 - |\eta|) \cdot \rand(t, s, a)$, where $\rand(t, s, a)$ is an independent, uniformly random real number between $0$ and $1$.
\end{itemize}
The correlation parameter $\eta$ controls the extent to which the interests of the principal and the agent are (mis)aligned.
In particular, if $\eta = 1$, then the principal and the agent have exactly the same valuations, corresponding to full cooperation.
If $\eta = -1$, then the principal and the agent are in a zero-sum situation, corresponding to full competition.

\begin{figure}[t]
    \centering
    \includegraphics[width=\figwidth]{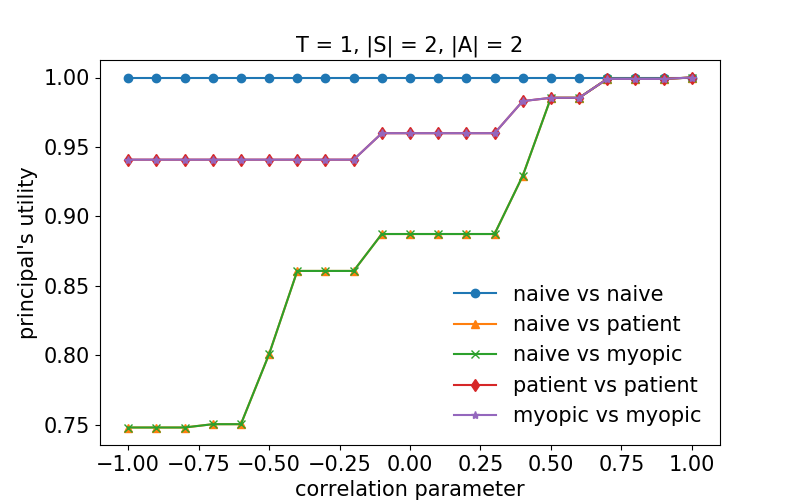}
    \includegraphics[width=\figwidth]{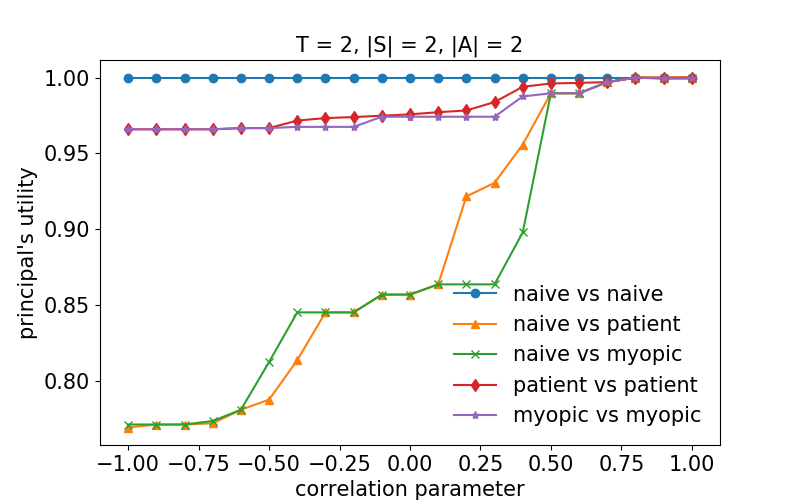}
    \includegraphics[width=\figwidth]{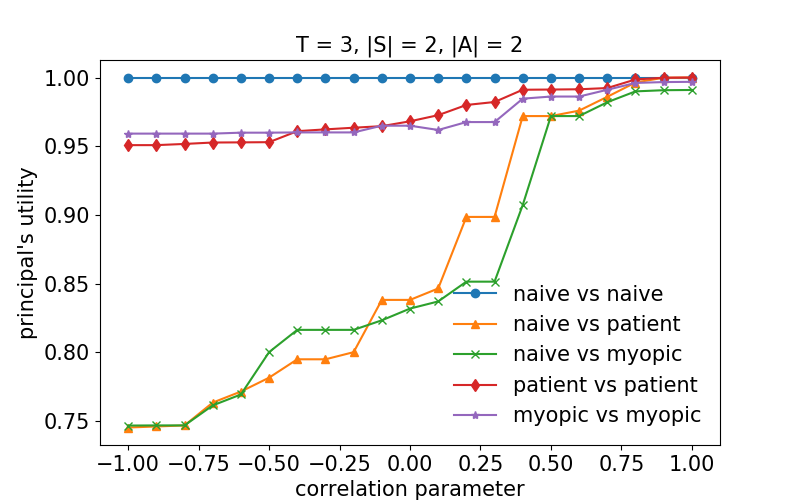}
    \includegraphics[width=\figwidth]{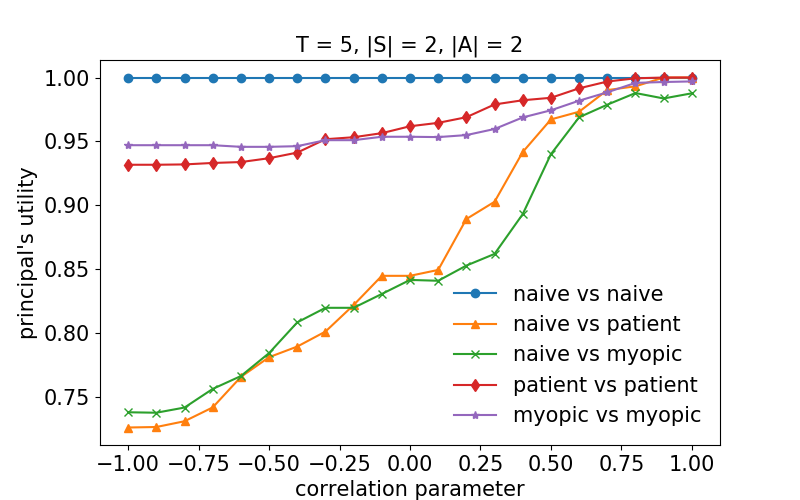}
    \includegraphics[width=\figwidth]{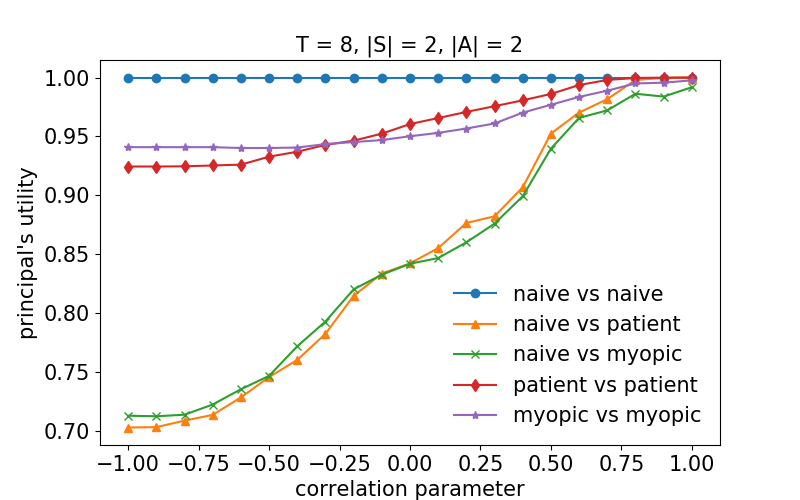}
    \includegraphics[width=\figwidth]{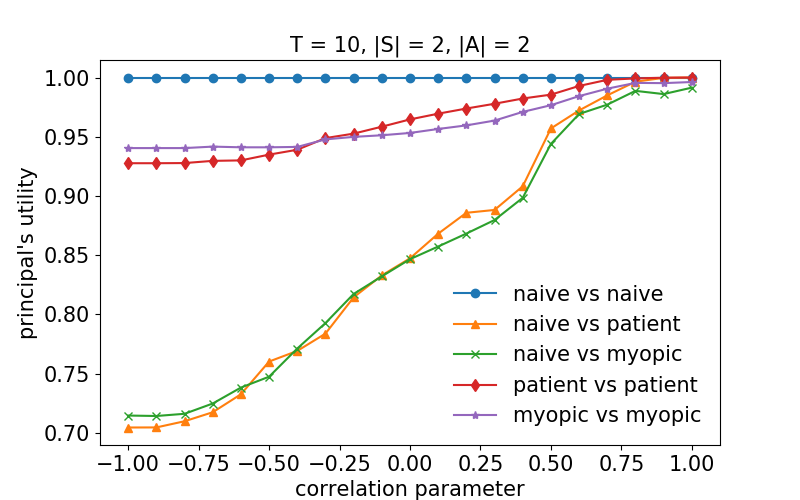}
    \caption{
        Performance of different mechanisms facing different types of agents when $|\cS| = |\cA| = 2$ and the time horizon $T$ varies.
        All numbers are normalized by the optimal utility facing a na\"ive agent.
        Every point is an average of $10$ independent runs using different random seeds.
    }
    \label{fig:exp_time}
\end{figure}

\begin{figure}[t]
    \centering
    \includegraphics[width=\figwidth]{fig/T2_S2_A2}
    \includegraphics[width=\figwidth]{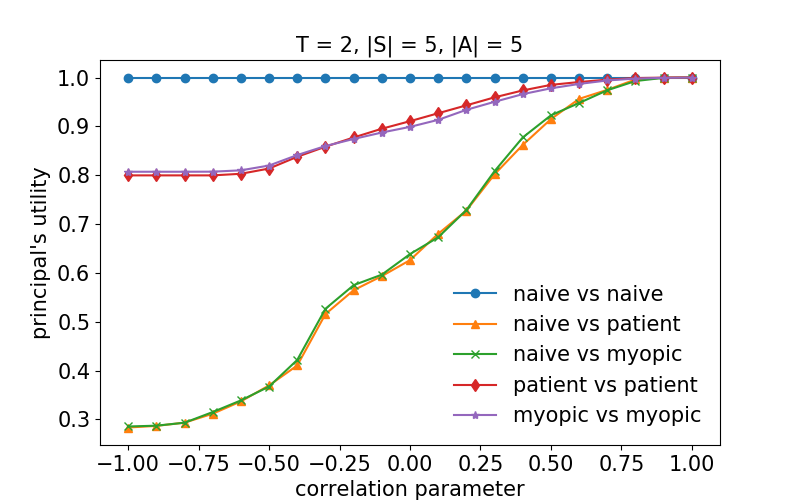}
    \includegraphics[width=\figwidth]{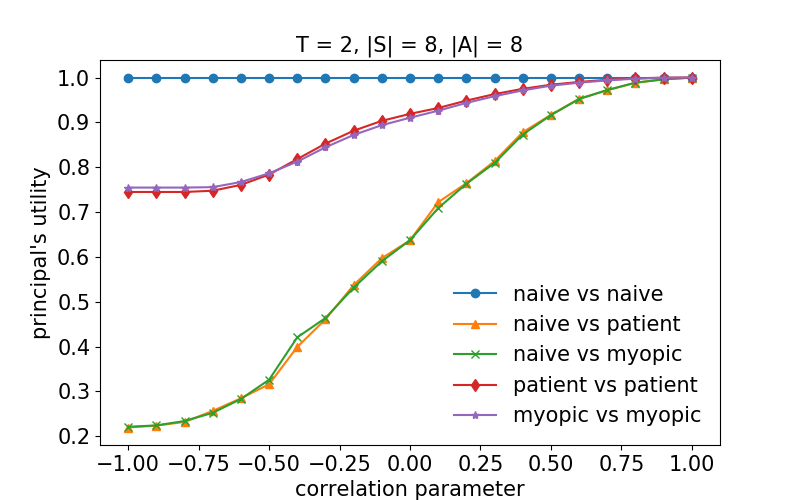}
    \includegraphics[width=\figwidth]{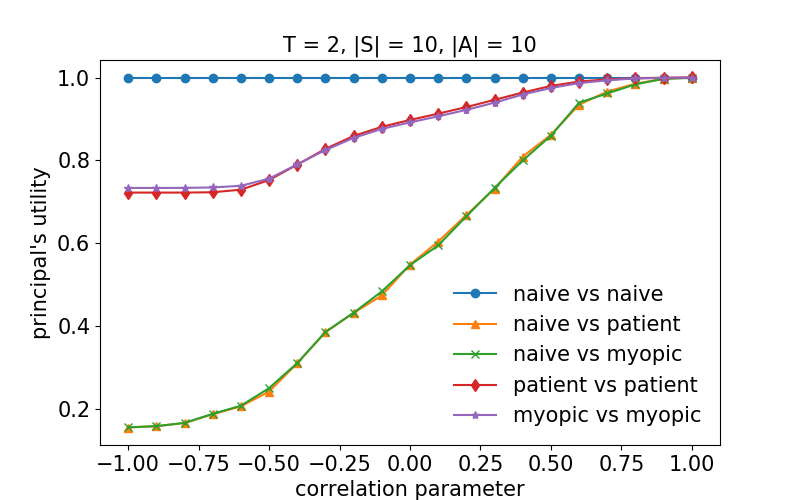}
    \includegraphics[width=\figwidth]{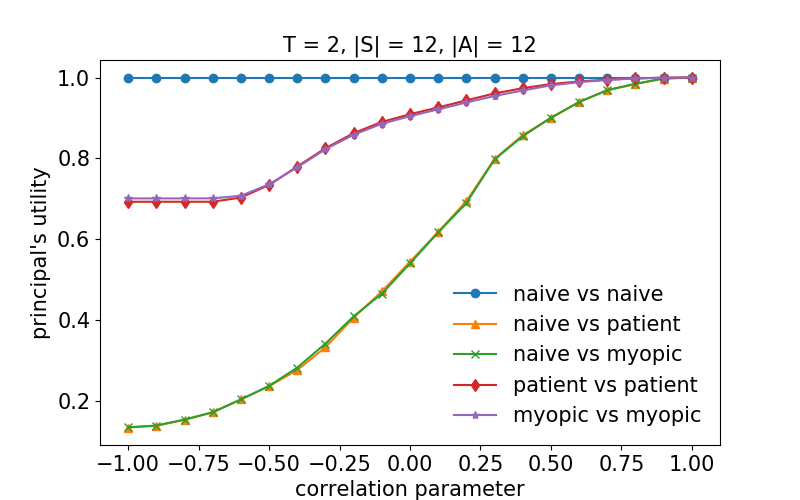}
    \includegraphics[width=\figwidth]{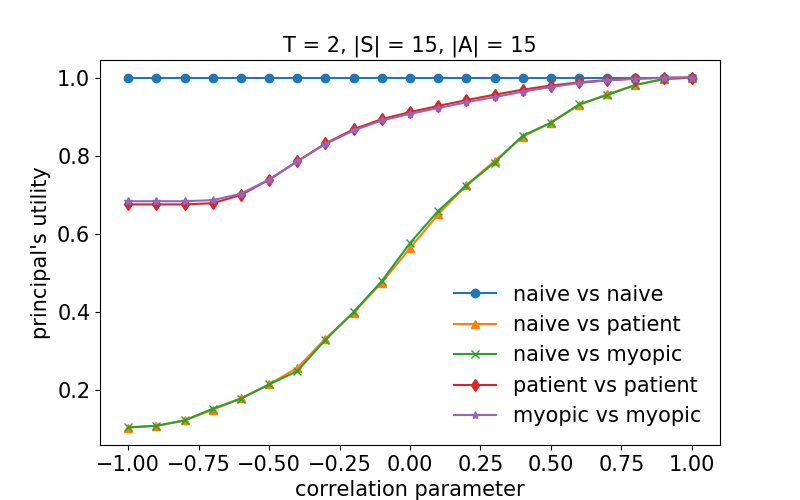}
    \caption{
        Performance of different mechanisms facing different types of agents when $T = 2$ and the numbers of states and actions, $|\cS|$ and $|\cA|$, vary.
        All numbers are normalized by the optimal utility facing a na\"ive agent.
        Every point is an average of $10$ independent runs using different random seeds.
    }
    \label{fig:exp_state}
\end{figure}

\subsection{Summary of Experimental Results}

\paragraph{Suboptimality of na\"ive mechanisms.}
As we can see from Figure~\ref{fig:exp_time}, even when the state and action spaces are extremely simple, i.e., there are only $2$ states and $2$ actions, when the correlation parameter $\eta = -1$ (i.e., when the agent acts adversarially), na\"ive mechanisms facing a strategic agent can only achieve about 75\% of the na\"ive benchmark, i.e., the optimal utility when the agent is na\"ive.
When $\eta = 0$ (i.e., when the agent's and principal's valuations are independent), na\"ive mechanisms facing a strategic agent still achieve only 85\% of the na\"ive benchmark.
On the other hand, the respective optimal mechanisms facing a patient or myopic agent consistently achieve about 95\% of the na\"ive benchmark.
This gap is further amplified in Figure~\ref{fig:exp_state}: as the environment becomes more and more complex (i.e., the numbers of states and actions become larger and larger), the utility of na\"ive mechanisms facing a strategic agent drops below 20\% of the na\"ive benchmark when $\eta = -1$, and to about 50\% when $\eta = 0$.
In contrast, the respective optimal mechanisms facing a patient or myopic agent still achieve about 70\% of the na\"ive benchmark even when $\eta = -1$.
These phenomena suggest that when the agent is not fully cooperative, taking strategic behavior into consideration significantly improves the principal's utility, even in extremely simple dynamic environments.
Moreover, the more complex the environment is, the larger this gap becomes.

Another interesting fact to note is that even when the principal's and the agent's valuations are exactly the same (i.e., when $\eta = 1$), na\"ive mechanisms are still suboptimal facing a myopic agent, since the agent may sacrifice greater long-term gain in exchange for smaller immediate value.
This phenomenon is more significant in Figure~\ref{fig:exp_time}, especially in environments with longer time horizons.
In such cases, taking into consideration the fact that the agent is myopic mitigates the loss, and recovers almost all the utility of the na\"ive benchmark.

\paragraph{Effect of cooperation and competition.}
As the correlation parameter increases, both Figure~\ref{fig:exp_time} and Figure~\ref{fig:exp_state} show clear upward trends in all the quantities that we consider (except for the na\"ive benchmark which is always normalized to $1$), as one would expect.
Nevertheless, we note the following facts from the figures: compared to na\"ive mechanisms, optimal mechanisms facing a strategic agent are much less affected by the correlation parameter.
Moreover, as Figure~\ref{fig:exp_time} shows, the performance of optimal mechanisms facing a strategic agent is remarkably stable as the time horizon grows.
In other words, in random dynamic environments, the utility loss caused by competing interests of the principal and the agent is only mildly amplified by long time horizons.

\paragraph{Difference between patient and myopic agents.}
As can be seen from the figures, regardless of whether the agent is patient or myopic, the principal's optimal utility is almost the same.
Nevertheless, the difference appears to be amplified as the time horizon grows (see Figure~\ref{fig:exp_time}).
When the correlation parameter $\eta = -1$, the optimal utility facing a myopic agent is noticeably larger than that facing a patient agent --- which makes sense as only the patient agent has interests truly opposite those of the principal.
This gap shrinks as $\eta$ becomes larger, and vanishes when $\eta$ is around $-0.25$.
Then, as $\eta$ continues to grow, the optimal utility facing a myopic agent falls behind and never catches up.
In particular, when $\eta = 1$, the optimal utility facing a patient agent is the same as the na\"ive benchmark, whereas that facing a myopic agent is slightly smaller.
The above phenomena indicate that in environments with a long time horizon, myopic agents are easier to exploit, while patient agents are easier to cooperate with.
Interestingly, the critical value of $\eta$, where the optimal utility facing a patient agent catches up, is about $-0.25$ instead of $0$, which suggests that even when the principal's and the agent's valuations are mildly negatively correlated, it is possible to find a middle ground where cooperation is more beneficial than exploitation in the long run.

\section{Conclusion}
We studied automated dynamic mechanism design and showed that, while it is computationally hard to find (even approximately) optimal mechanisms when (1) facing a patient agent and  (2) the horizon is long, when either of these two conditions is dropped, an optimal mechanism can be found efficiently.  We also showed that optimal memoryless mechanisms are hard to compute and can be severely suboptimal relative to unconstrained mechanisms.  Our experimental results showed significant improvements relative to na\"ive approaches that do not take the agent's incentives into account, as well as that the difference in performance between facing a myopic and a patient agent is not large.  When the setting is sufficiently adversarial it is better to face a myopic agent, but otherwise it is better to face a patient agent.

Besides using these algorithms directly for appropriate applications, the experimental results that they enable (including those that we presented in this paper) can guide new theory.  For example, can we rigorously prove the benefit of facing a patient agent when the setting is not all too adversarial, and perhaps even characterize the transition point at which facing a patient agent becomes better than facing a myopic one?  Analytically derived mechanisms can also be compared to these experimental results to see how close to optimal in performance they are.  Finally, close inspection of the actual mechanisms generated by our algorithms may reveal insights that can be used to analytically design new mechanisms.

\bibliographystyle{plainnat}
\bibliography{ref}

\begin{thebibliography}{51}
\providecommand{\natexlab}[1]{#1}
\providecommand{\url}[1]{\texttt{#1}}
\expandafter\ifx\csname urlstyle\endcsname\relax
  \providecommand{\doi}[1]{doi: #1}\else
  \providecommand{\doi}{doi: \begingroup \urlstyle{rm}\Url}\fi

\bibitem[Ashlagi et~al.(2016)Ashlagi, Daskalakis, and
  Haghpanah]{ashlagi2016sequential}
Itai Ashlagi, Constantinos Daskalakis, and Nima Haghpanah.
\newblock Sequential mechanisms with ex-post participation guarantees.
\newblock In \emph{Proceedings of the 2016 ACM Conference on Economics and
  Computation}, pages 213--214, 2016.

\bibitem[Athey and Segal(2013)]{athey2013efficient}
Susan Athey and Ilya Segal.
\newblock An efficient dynamic mechanism.
\newblock \emph{Econometrica}, 81\penalty0 (6):\penalty0 2463--2485, 2013.

\bibitem[Balcan et~al.(2016)Balcan, Sandholm, and Vitercik]{balcan2016sample}
Maria-Florina Balcan, Tuomas Sandholm, and Ellen Vitercik.
\newblock Sample complexity of automated mechanism design.
\newblock In \emph{Proceedings of the 30th International Conference on Neural
  Information Processing Systems}, pages 2091--2099, 2016.

\bibitem[Balcan et~al.(2018)Balcan, Sandholm, and Vitercik]{balcan2018general}
Maria-Florina Balcan, Tuomas Sandholm, and Ellen Vitercik.
\newblock A general theory of sample complexity for multi-item profit
  maximization.
\newblock In \emph{Proceedings of the 2018 ACM Conference on Economics and
  Computation}, pages 173--174, 2018.

\bibitem[Balseiro et~al.(2019)Balseiro, Gurkan, and
  Sun]{balseiro2019multiagent}
Santiago~R Balseiro, Huseyin Gurkan, and Peng Sun.
\newblock Multiagent mechanism design without money.
\newblock \emph{Operations Research}, 67\penalty0 (5):\penalty0 1417--1436,
  2019.

\bibitem[Baron and Besanko(1984)]{baron1984regulation}
David~P Baron and David Besanko.
\newblock Regulation and information in a continuing relationship.
\newblock \emph{Information Economics and policy}, 1\penalty0 (3):\penalty0
  267--302, 1984.

\bibitem[Bellman(1957)]{bellman1957markovian}
Richard Bellman.
\newblock A markovian decision process.
\newblock \emph{Journal of mathematics and mechanics}, 6\penalty0 (5):\penalty0
  679--684, 1957.

\bibitem[Bergemann and V{\"a}lim{\"a}ki(2010)]{bergemann2010dynamic}
Dirk Bergemann and Juuso V{\"a}lim{\"a}ki.
\newblock The dynamic pivot mechanism.
\newblock \emph{Econometrica}, 78\penalty0 (2):\penalty0 771--789, 2010.

\bibitem[Bergemann and V{\"a}lim{\"a}ki(2019)]{bergemann2019dynamic}
Dirk Bergemann and Juuso V{\"a}lim{\"a}ki.
\newblock Dynamic mechanism design: An introduction.
\newblock \emph{Journal of Economic Literature}, 57\penalty0 (2):\penalty0
  235--74, 2019.

\bibitem[Bosansky and Cermak(2015)]{bosansky2015sequence}
Branislav Bosansky and Jiri Cermak.
\newblock Sequence-form algorithm for computing stackelberg equilibria in
  extensive-form games.
\newblock In \emph{Proceedings of the AAAI Conference on Artificial
  Intelligence}, volume~29, 2015.

\bibitem[Bo{\v{s}}ansk{\`y} et~al.(2017)Bo{\v{s}}ansk{\`y}, Br{\^a}nzei,
  Hansen, Lund, and Miltersen]{bovsansky2017computation}
Branislav Bo{\v{s}}ansk{\`y}, Simina Br{\^a}nzei, Kristoffer~Arnsfelt Hansen,
  Troels~Bjerre Lund, and Peter~Bro Miltersen.
\newblock Computation of stackelberg equilibria of finite sequential games.
\newblock \emph{ACM Transactions on Economics and Computation (TEAC)},
  5\penalty0 (4):\penalty0 1--24, 2017.

\bibitem[Cermak et~al.(2016)Cermak, Bosansky, Durkota, Lisy, and
  Kiekintveld]{cermak2016using}
Jiri Cermak, Branislav Bosansky, Karel Durkota, Viliam Lisy, and Christopher
  Kiekintveld.
\newblock Using correlated strategies for computing stackelberg equilibria in
  extensive-form games.
\newblock In \emph{Proceedings of the AAAI Conference on Artificial
  Intelligence}, volume~30, 2016.

\bibitem[{\v{C}}ern{\`y} et~al.(2018){\v{C}}ern{\`y}, Bo{\`y}ansk{\`y}, and
  Kiekintveld]{vcerny2018incremental}
Jakub {\v{C}}ern{\`y}, Branislav Bo{\`y}ansk{\`y}, and Christopher Kiekintveld.
\newblock Incremental strategy generation for stackelberg equilibria in
  extensive-form games.
\newblock In \emph{Proceedings of the 2018 ACM Conference on Economics and
  Computation}, pages 151--168, 2018.

\bibitem[{\v{C}}ern{\`y} et~al.(2020){\v{C}}ern{\`y}, Bosansk{\`y}, and
  An]{vcerny2020finite}
Jakub {\v{C}}ern{\`y}, Branislav Bosansk{\`y}, and Bo~An.
\newblock Finite state machines play extensive-form games.
\newblock In \emph{Proceedings of the 21st ACM Conference on Economics and
  Computation}, pages 509--533, 2020.

\bibitem[Chen and Deng(2006)]{chen2006settling}
Xi~Chen and Xiaotie Deng.
\newblock Settling the complexity of two-player nash equilibrium.
\newblock In \emph{2006 47th Annual IEEE Symposium on Foundations of Computer
  Science (FOCS'06)}, pages 261--272. IEEE, 2006.

\bibitem[Conitzer and Sandholm(2002)]{conitzer2002complexity}
Vincent Conitzer and Tuomas Sandholm.
\newblock Complexity of mechanism design.
\newblock \emph{arXiv preprint cs/0205075}, 2002.

\bibitem[Conitzer and Sandholm(2004)]{conitzer2004self}
Vincent Conitzer and Tuomas Sandholm.
\newblock Self-interested automated mechanism design and implications for
  optimal combinatorial auctions.
\newblock In \emph{Proceedings of the 5th ACM Conference on Electronic
  Commerce}, pages 132--141, 2004.

\bibitem[Conitzer and Sandholm(2006)]{conitzer2006computing}
Vincent Conitzer and Tuomas Sandholm.
\newblock Computing the optimal strategy to commit to.
\newblock In \emph{Proceedings of the 7th ACM conference on Electronic
  commerce}, pages 82--90, 2006.

\bibitem[Courty and Hao(2000)]{courty2000sequential}
Pascal Courty and Li~Hao.
\newblock Sequential screening.
\newblock \emph{The Review of Economic Studies}, 67\penalty0 (4):\penalty0
  697--717, 2000.

\bibitem[Daskalakis et~al.(2009)Daskalakis, Goldberg, and
  Papadimitriou]{daskalakis2009complexity}
Constantinos Daskalakis, Paul~W Goldberg, and Christos~H Papadimitriou.
\newblock The complexity of computing a nash equilibrium.
\newblock \emph{SIAM Journal on Computing}, 39\penalty0 (1):\penalty0 195--259,
  2009.

\bibitem[D{\"u}tting et~al.(2019)D{\"u}tting, Feng, Narasimhan, Parkes, and
  Ravindranath]{dutting2019optimal}
Paul D{\"u}tting, Zhe Feng, Harikrishna Narasimhan, David Parkes, and
  Sai~Srivatsa Ravindranath.
\newblock Optimal auctions through deep learning.
\newblock In \emph{International Conference on Machine Learning}, pages
  1706--1715. PMLR, 2019.

\bibitem[Es{\H{o}} and Szentes(2007)]{esHo2007optimal}
P{\'e}ter Es{\H{o}} and Balazs Szentes.
\newblock Optimal information disclosure in auctions and the handicap auction.
\newblock \emph{The Review of Economic Studies}, 74\penalty0 (3):\penalty0
  705--731, 2007.

\bibitem[Feng et~al.(2018)Feng, Narasimhan, and Parkes]{feng2018deep}
Zhe Feng, Harikrishna Narasimhan, and David~C Parkes.
\newblock Deep learning for revenue-optimal auctions with budgets.
\newblock In \emph{Proceedings of the 17th International Conference on
  Autonomous Agents and Multiagent Systems}, pages 354--362, 2018.

\bibitem[Freeman et~al.(2018)Freeman, Zahedi, Conitzer, and
  Lee]{freeman2018dynamic}
Rupert Freeman, Seyed~Majid Zahedi, Vincent Conitzer, and Benjamin~C Lee.
\newblock Dynamic proportional sharing: A game-theoretic approach.
\newblock \emph{Proceedings of the ACM on Measurement and Analysis of Computing
  Systems}, 2\penalty0 (1):\penalty0 1--36, 2018.

\bibitem[Gorokh et~al.(2019)Gorokh, Banerjee, and Iyer]{gorokh2019monetary}
Artur Gorokh, Siddhartha Banerjee, and Krishnamurthy Iyer.
\newblock From monetary to non-monetary mechanism design via artificial
  currencies.
\newblock \emph{Available at SSRN 2964082}, 2019.

\bibitem[Guo et~al.(2009)Guo, Conitzer, and Reeves]{guo2009competitive}
Mingyu Guo, Vincent Conitzer, and Daniel~M Reeves.
\newblock Competitive repeated allocation without payments.
\newblock In \emph{International Workshop on Internet and Network Economics},
  pages 244--255. Springer, 2009.

\bibitem[Hajiaghayi et~al.(2007)Hajiaghayi, Kleinberg, and
  Sandholm]{hajiaghayi2007automated}
Mohammad~Taghi Hajiaghayi, Robert Kleinberg, and Tuomas Sandholm.
\newblock Automated online mechanism design and prophet inequalities.
\newblock In \emph{AAAI}, volume~7, pages 58--65, 2007.

\bibitem[H{\aa}stad(2001)]{haastad2001some}
Johan H{\aa}stad.
\newblock Some optimal inapproximability results.
\newblock \emph{Journal of the ACM (JACM)}, 48\penalty0 (4):\penalty0 798--859,
  2001.

\bibitem[Howard(1960)]{howard1960dynamic}
Ronald~A Howard.
\newblock Dynamic programming and markov processes.
\newblock 1960.

\bibitem[Kephart and Conitzer(2015)]{kephart2015complexity}
Andrew Kephart and Vincent Conitzer.
\newblock Complexity of mechanism design with signaling costs.
\newblock In \emph{AAMAS}, pages 357--365. Citeseer, 2015.

\bibitem[Kephart and Conitzer(2016)]{kephart2016revelation}
Andrew Kephart and Vincent Conitzer.
\newblock The revelation principle for mechanism design with reporting costs.
\newblock In \emph{Proceedings of the 2016 ACM Conference on Economics and
  Computation}, pages 85--102, 2016.

\bibitem[Koller et~al.(1996)Koller, Megiddo, and
  Von~Stengel]{koller1996efficient}
Daphne Koller, Nimrod Megiddo, and Bernhard Von~Stengel.
\newblock Efficient computation of equilibria for extensive two-person games.
\newblock \emph{Games and economic behavior}, 14\penalty0 (2):\penalty0
  247--259, 1996.

\bibitem[Kroer et~al.(2018)Kroer, Farina, and Sandholm]{kroer2018robust}
Christian Kroer, Gabriele Farina, and Tuomas Sandholm.
\newblock Robust stackelberg equilibria in extensive-form games and extension
  to limited lookahead.
\newblock In \emph{Proceedings of the AAAI Conference on Artificial
  Intelligence}, volume~32, 2018.

\bibitem[Letchford and Conitzer(2010)]{letchford2010computing}
Joshua Letchford and Vincent Conitzer.
\newblock Computing optimal strategies to commit to in extensive-form games.
\newblock In \emph{Proceedings of the 11th ACM conference on Electronic
  commerce}, pages 83--92, 2010.

\bibitem[Mirrokni et~al.(2020)Mirrokni, Paes~Leme, Tang, and
  Zuo]{mirrokni2020non}
Vahab Mirrokni, Renato Paes~Leme, Pingzhong Tang, and Song Zuo.
\newblock Non-clairvoyant dynamic mechanism design.
\newblock \emph{Econometrica}, 88\penalty0 (5):\penalty0 1939--1963, 2020.

\bibitem[Mundhenk et~al.(2000)Mundhenk, Goldsmith, Lusena, and
  Allender]{mundhenk2000complexity}
Martin Mundhenk, Judy Goldsmith, Christopher Lusena, and Eric Allender.
\newblock Complexity of finite-horizon markov decision process problems.
\newblock \emph{Journal of the ACM (JACM)}, 47\penalty0 (4):\penalty0 681--720,
  2000.

\bibitem[Myerson(1981)]{myerson1981optimal}
Roger~B Myerson.
\newblock Optimal auction design.
\newblock \emph{Mathematics of operations research}, 6\penalty0 (1):\penalty0
  58--73, 1981.

\bibitem[Myerson(1986)]{myerson1986multistage}
Roger~B Myerson.
\newblock Multistage games with communication.
\newblock \emph{Econometrica: Journal of the Econometric Society}, pages
  323--358, 1986.

\bibitem[Narasimhan et~al.(2016)Narasimhan, Agarwal, and
  Parkes]{narasimhan2016automated}
Harikrishna Narasimhan, Shivani~Brinda Agarwal, and David~C Parkes.
\newblock Automated mechanism design without money via machine learning.
\newblock In \emph{Proceedings of the 25th International Joint Conference on
  Artificial Intelligence}, 2016.

\bibitem[Papadimitriou et~al.(2016)Papadimitriou, Pierrakos, Psomas, and
  Rubinstein]{papadimitriou2016complexity}
Christos Papadimitriou, George Pierrakos, Christos-Alexandros Psomas, and Aviad
  Rubinstein.
\newblock On the complexity of dynamic mechanism design.
\newblock In \emph{Proceedings of the twenty-seventh annual ACM-SIAM symposium
  on Discrete algorithms}, pages 1458--1475. SIAM, 2016.

\bibitem[Papadimitriou and Tsitsiklis(1987)]{papadimitriou1987complexity}
Christos~H Papadimitriou and John~N Tsitsiklis.
\newblock The complexity of markov decision processes.
\newblock \emph{Mathematics of operations research}, 12\penalty0 (3):\penalty0
  441--450, 1987.

\bibitem[Pavan(2017)]{pavan2017dynamic}
Alessandro Pavan.
\newblock Dynamic mechanism design: Robustness and endogenous types.
\newblock In \emph{Advances in Economics and Econometrics: Eleventh World
  Congress}, pages 1--62, 2017.

\bibitem[Pavan et~al.(2014)Pavan, Segal, and Toikka]{pavan2014dynamic}
Alessandro Pavan, Ilya Segal, and Juuso Toikka.
\newblock Dynamic mechanism design: A myersonian approach.
\newblock \emph{Econometrica}, 82\penalty0 (2):\penalty0 601--653, 2014.

\bibitem[Puterman and Shin(1978)]{puterman1978modified}
Martin~L Puterman and Moon~Chirl Shin.
\newblock Modified policy iteration algorithms for discounted markov decision
  problems.
\newblock \emph{Management Science}, 24\penalty0 (11):\penalty0 1127--1137,
  1978.

\bibitem[Rahme et~al.(2020)Rahme, Jelassi, Bruna, and
  Weinberg]{rahme2020permutation}
Jad Rahme, Samy Jelassi, Joan Bruna, and S~Matthew Weinberg.
\newblock A permutation-equivariant neural network architecture for auction
  design.
\newblock \emph{arXiv preprint arXiv:2003.01497}, 2020.

\bibitem[Sandholm and Likhodedov(2015)]{sandholm2015automated}
Tuomas Sandholm and Anton Likhodedov.
\newblock Automated design of revenue-maximizing combinatorial auctions.
\newblock \emph{Operations Research}, 63\penalty0 (5):\penalty0 1000--1025,
  2015.

\bibitem[Sandholm et~al.(2007)Sandholm, Conitzer, and
  Boutilier]{sandholm2007automated}
Tuomas Sandholm, Vincent Conitzer, and Craig Boutilier.
\newblock Automated design of multistage mechanisms.
\newblock In \emph{IJCAI}, volume~7, pages 1500--1506, 2007.

\bibitem[Shen et~al.(2019)Shen, Tang, and Zuo]{shen2019automated}
Weiran Shen, Pingzhong Tang, and Song Zuo.
\newblock Automated mechanism design via neural networks.
\newblock In \emph{Proceedings of the 18th International Conference on
  Autonomous Agents and Multiagent Systems}, pages 215--223, 2019.

\bibitem[Tardos and Vazirani(2007)]{tardos2007basic}
Eva Tardos and Vijay~V Vazirani.
\newblock Basic solution concepts and computational issues.
\newblock \emph{Algorithmic game theory}, pages 3--28, 2007.

\bibitem[von Stengel(1996)]{von1996efficient}
Bernhard von Stengel.
\newblock Efficient computation of behavior strategies.
\newblock \emph{Games and Economic Behavior}, 14\penalty0 (2):\penalty0
  220--246, 1996.

\bibitem[Zhang et~al.(2021)Zhang, Cheng, and Conitzer]{zhang2021automated}
Hanrui Zhang, Yu~Cheng, and Vincent Conitzer.
\newblock Automated mechanism design for classification with partial
  verification.
\newblock In \emph{Proceedings of the AAAI Conference on Artificial
  Intelligence}, 2021.

\end{thebibliography}

\clearpage

\appendix
\section{Omitted Proofs from Section~\ref{sec:general}}

\begin{proof}[Proof of Theorem~\ref{thm:long}]
    We consider the case where payments are not allowed, i.e., $p_t(h, s) = 0$ for all $h \in \cH$ and $s \in \cS$.
    The case with payments and dynamic IR constraints is essentially the same.
    We use a similar reduction from MAX-SAT to the ones in \cite{papadimitriou1987complexity,mundhenk2000complexity} for partially observable Markov decision processes (POMDPs).
    Given a MAX-SAT instance with $n$ variables $x_1, \dots, x_n$ and $m$ clauses $c_1, \dots, c_m$ where $c_i = \{\ell_{i, j}\}_{j \in [k_i]}$ and each $\ell_{i, j}$ is a literal, we construct a dynamic environment where $T = n$, $|\cS| = m + 1$, and $|\cA| = 2$.
    The goal is to show that the maximum utility is precisely the fraction of clauses that can be simultaneously satisfied.
    Without loss of generality, we assume that no clause contains both the positive literal and the negative literal of a same variable.
    We first describe $\cS$ and $\cA$.
    Each clause $c_i$ corresponds to a unique state in $\cS$, $s_i$.
    In addition to these $m$ states, there is another state $s_0$.
    $\cA$ consists of two actions: $a_\mathrm{pos}$ and $a_\mathrm{neg}$.
    The transition operator and the principal's valuation function are such that:
    \begin{itemize}
        \item The initial distribution is uniform over $\{s_i\}_{i \in [m]}$, i.e., $\trans_0(s_i) = 1 / m$ for each $i \in [m]$.
        \item For each $t \in [T]$ and $a \in \cA$,
        \[
            \trans_t(s_0, a, s_0) = 1 \quad \text{and} \quad \vp_t(s_0, a) = 0.
        \]
        Moreover, for each $t \in [T]$ and $i \in [m]$:
        \begin{itemize}
            \item If $x_t^+ \in c_i$, then
            \[
                \trans_t(s_i, a_\mathrm{pos}, s_0) = \trans_t(s_i, a_\mathrm{neg}, s_i) = 1,
            \]
            and
            \[
                \vp_t(s_i, a_\mathrm{pos}) = 1 \quad \text{and} \quad \vp_t(s_i, a_\mathrm{neg}) = 0.
            \]
            \item if $x_t^- \in c_i$, then
            \[
                \trans_t(s_i, a_\mathrm{pos}, s_i) = \trans_t(s_i, a_\mathrm{neg}, s_0) = 1,
            \]
            and
            \[
                \vp_t(s_i, a_\mathrm{pos}) = 0 \quad \text{and} \quad \vp_t(s_i, a_\mathrm{neg}) = 1.
            \]
            \item otherwise,
            \[
                \trans_t(s_i, a_\mathrm{pos}, s_i) = \trans_t(s_i, a_\mathrm{neg}, s_i) = 1,
            \]
            and
            \[
                \vp_t(s_i, a_\mathrm{pos}) = \vp_t(s_i, a_\mathrm{neg}) = 0.
            \]
        \end{itemize}
        \item The principal and the agent are in a zero-sum situation, i.e., for any $t \in [T]$, $s \in \cS$, $a \in \cA$,
        \[
            \va_t(s, a) = 1 - \vp_t(s, a).
        \]
    \end{itemize}

    Now we show that the maximum utility is precisely the fraction of clauses that can be simultaneously satisfied.
    First observe that without loss of generality, an optimal mechanism depends only on time (and not on the reported states).
    This is because of the zero-sum situation: if the mechanism depends on the reports, then the agent can always choose the worst sequence of actions, which can only make the principal's utility smaller.
    Moreover, given the above observation, without loss of generality, an optimal mechanism is deterministic.
    This is because the overall utility of the principal is linear in the action at any time $t$, so one can always round a randomized mechanism into a deterministic one with at least the same overall utility.
    
    Given the above observations, an optimal mechanism corresponds precisely to a way of assigning values to variables in the MAX-SAT instance: for each $t \in [T]$, the action at time $t$ is $a_\mathrm{pos}$ iff the variable $x_t = 1$ (i.e., the literal $x_t^+$ is chosen).
    Moreover, when the initial state is $s_i$, the onward utility is $1$ if the clause $c_i$ is satisfied by the above assignment, and $0$ otherwise.
    Since the initial state is uniformly at random among $\{s_i\}_{i \in [m]}$, the maximum utility is precisely the maximum fraction of clauses that are satisfiable by some assignment.
    The theorem then follows from the fact that MAX-SAT is hard to approximate within a factor of $7/8 + \eps$ for any $\eps > 0$ \cite{haastad2001some}.
\end{proof}

\begin{proof}[Proof of Lemma~\ref{lem:obj}]
    For brevity, let $\mathrm{obj}$ denote the objective, i.e.,
    \[
        \mathrm{obj} = \sum_{h \in \cH, s \in \cS: (h, s)\text{ is feasible}} \left(\sum_{a \in \cA} \vp_{|h| + 1}(s, a) \cdot x(h, s, a) + y(h, s)\right).
    \]
    Moreover, for each $h \in \cH$, $s \in \cS$, let
    \[
        \mathrm{obj}(h, s) = \sum_{h' \in \cH, s' \in \cS: (h, s) \subseteq (h', s')} \left(\sum_{a \in \cA} \vp_{|h'| + 1}(s', a) \cdot x(h', s', a) + y(h', s')\right).
    \]
    Observe that
    \[
        \mathrm{obj} = \sum_{s \in \cS} \mathrm{obj}(\emptyset, s).
    \]

    We first prove inductively that for each $h \in \cH$, $s \in \cS$,
    \[
        \mathrm{obj}(h, s) = z(h, s) \cdot \up^M(h, s).
    \]
    When $|h| = T - 1$, by the definition of feasible extensions and the construction of the mechanism,
    \begin{align*}
        \mathrm{obj}(h, s) & = \sum_{a \in \cA} \vp_T(s, a) \cdot x(h, s, a) + y(h, s) \\
        & = z(h, s) \cdot \left(\sum_{a \in \cA} \vp_T(s, a) \cdot \pi(h, s, a) + p(h, s) \right) \\
        & = z(h, s) \cdot \up^M(h, s).
    \end{align*}
    When $|h| < T - 1$, for similar reasons,
    \begin{align*}
        \mathrm{obj}(h, s) & = \sum_{h', s': (h, s) \subseteq (h', s')} \left(\sum_{a \in \cA} \vp_{|h'| + 1}(s', a) \cdot x(h', s', a) + y(h', s')\right) \\
        & = \sum_{a \in \cA} \vp_{|h| + 1}(s, a) \cdot x(h, s, a) + y(h, s) \\
        & \phantom{\ =\ } + \sum_{h', s': (h, s) \subseteq (h', s'), |h'| > |h|} \left(\sum_{a \in \cA} \vp_{|h'| + 1}(s', a) \cdot x(h', s', a) + y(h', s')\right) \\
        & = z(h, s) \cdot \left(\sum_{a \in \cA} \vp_{|h| + 1}(s, a) \cdot \pi(h, s, a) + p(h, s)\right) \\
        & \phantom{\ =\ } + \sum_{a', s'': \trans_{|h| + 1}(s, a', s'') > 0} \sum_{h', s': (h + (s, a'), s'') \subseteq (h', s')} \left(\sum_{a \in \cA} \vp_{|h'| + 1}(s', a) \cdot x(h', s', a) + y(h', s')\right)
    \end{align*}
    By the induction hypothesis, the second sum above is equal to
    \begin{align*}
        & \phantom{\ =\ } \sum_{a', s'': \trans_{|h| + 1}(s, a', s'') > 0} \mathrm{obj}(h + (s, a'), s'') \\
        & = \sum_{a', s'': \trans_{|h| + 1}(s, a', s'') > 0} z(h + (s, a'), s'') \cdot \up^M(h + (s, a'), s'') \\
        & = \sum_{a', s'': \trans_{|h| + 1}(s, a', s'') > 0} x(h, s, a') \cdot \etrans_{|h| + 1}(s, a', s'') \cdot \up^M(h + (s, a'), s'') \\
        & = \sum_{a \in \cA, s' \in \cS} x(h, s, a) \cdot \trans_{|h| + 1}(s, a, s') \cdot \up^M(h + (s, a), s') \\
        & = z(h, s) \cdot \sum_{a \in \cA} \left(\pi(h, s, a) \cdot \sum_{s' \in \cS} \trans_{|h| + 1}(s, a, s') \cdot \up^M(h + (s, a), s')\right).
    \end{align*}
    Putting this back into the above expression for $\mathrm{obj}(h, s)$, we get
    \begin{align*}
        & \phantom{\ = \ } \mathrm{obj}(h, s) \\
        & = z(h, s) \cdot \left(\sum_{a \in \cA} \vp_{|h| + 1}(s, a) \cdot \pi(h, s, a) + p(h, s)\right) \\
        & \phantom{\ = \ } + z(h, s) \cdot \sum_{a \in \cA} \left(\pi(h, s, a) \cdot \sum_{s' \in \cS} \trans_{|h| + 1}(s, a, s') \cdot \up^M(h + (s, a), s')\right) \\
        & = z(h, s) \cdot \left(\sum_{a \in \cA} \cdot \pi_{|h| + 1}(h, s, a) \cdot \left(\vp_{|h| + 1}(s, a) + \sum_{s' \in \cS} \trans_{|h| + 1}(s, a, s') \cdot \up^M(h + (s, a), s')\right) + p(h, s)\right) \\
        & = z(h, s) \cdot \up^M(h, s).
    \end{align*}
    So for any $h \in \cH$, $s \in \cS$, $\mathrm{obj}(h, s) = z(h, s) \cdot \up^M(h, s)$.
    Then we immediately have
    \begin{align*}
        \up^M(\emptyset) & = \sum_{s \in \cS} \trans_0(s) \cdot \up^M(\emptyset, s) = \sum_{s \in \cS} z(\emptyset, s) \cdot \up^M(\emptyset, s) = \sum_{s \in \cS} \mathrm{obj}(\emptyset, s) = \mathrm{obj}. \qedhere
    \end{align*}
\end{proof}

\begin{proof}[Proof of Lemma~\ref{lem:ic1}]
    By Eq.~\eqref{eq:flow3} and Lemma~\ref{lem:util}, for all $h$, $s$, $s'$,
    \begin{align*}
        & \phantom{\ = \ } u(h, s, s') \\
        & = \sum_{a \in \cA} \va_{|h| + 1}(s, a) \cdot x(h, s', a) - y(h, s') \\
        & \phantom{\ =\ } + \sum_{a \in \cA, s'' \in \cS} \frac{\trans_{|h| + 1}(s, a, s'')}{\etrans_{|h| + 1}(s', a, s'')} \cdot z(h + (s', a), s'') \cdot \ua^M(h + (s', a), s'') \tag{Lemma~\ref{lem:util}} \\
        & = \sum_{a \in \cA} \va_{|h| + 1}(s, a) \cdot x(h, s', a) - y(h, s') + \sum_{a \in \cA, s'' \in \cS} \trans_{|h| + 1}(s, a, s'') \cdot x(h, s', a) \cdot \ua^M(h + (s', a), s'') \tag{Eq.~\eqref{eq:flow3}}
    \end{align*}
    Now by rearranging the above expression and applying the construction of the mechanism $M$ and the single-step reporting strategy $r_{h, s, s'}$, we have
    \begin{align*}
        & \phantom{\ = \ } u(h, s, s') \\
        & = \sum_{a \in \cA} x(h, s', a) \left(\va_{|h| + 1}(s, a) + \sum_{s'' \in \cS} \trans_{|h| + 1}(s, a, s'') \cdot \ua^M(h + (s', a), s'')\right) - y(h, s') \tag{rearranging} \\
        & = z(h, s') \cdot \left(\sum_{a} \pi(h, s', a) \cdot \left(\va_{|h| + 1}(s, a) + \sum_{s''} \trans_{|h| + 1}(s, a, s'') \cdot \ua^M(h + (s', a), s'')\right) - p(h, s')\right) \tag{construction of mechanism} \\
        & = z(h, s') \cdot \left(\sum_{a} \pi(h, s', a) \cdot \left(\va_{|h| + 1}(s, a) + \sum_{s''} \trans_{|h| + 1}(s, a, s'') \cdot \ua^{M, r_{h, s, s'}}(h + (s', a), s'')\right) - p(h, s')\right) \tag{construction of $r_{h, s, s'}$} \\
        & = z(h, s') \cdot \ua^{M, r_{h, s, s'}}(h, s), \tag{definition of $\ua^{M, r_{h, s, s'}}$}
    \end{align*}
    as desired.
\end{proof}

\begin{proof}[Proof of Lemma~\ref{lem:ic2}]
    Fix $h \in \cH$, $s, s' \in \cS$, and let $(s_p, a_p) = \last(h)$.
    When $h = \emptyset$, by Lemmas~\ref{lem:util}~and~\ref{lem:ic1} and Eq.~\eqref{eq:flow2},
    \begin{align*}
        & \phantom{\ \iff \ } u(h, s) \ge \frac{\etrans_{|h|}(s_p, a_p, s)}{\etrans_{|h|}(s_p, a_p, s')} \cdot u(h, s, s') \\
        & \iff z(\emptyset, s) \cdot \ua^M(\emptyset, s) \ge \frac{\etrans_{0}(s_p, a_p, s)}{\etrans_{0}(s_p, a_p, s')} \cdot z(\emptyset, s') \cdot \ua^{M, r_{\emptyset, s, s'}}(\emptyset, s) \\
        & \iff z(\emptyset, s) \cdot \ua^M(\emptyset, s) \ge \frac{\etrans_{0}(s)}{\etrans_{0}(s')} \cdot z(\emptyset, s') \cdot \ua^{M, r_{\emptyset, s, s'}}(\emptyset, s) \\
        & \iff \ua^M(\emptyset, s) \ge \ua^{M, r_{\emptyset, s, s'}}(\emptyset, s).
    \end{align*}
    When $|h| > 0$, suppose $h = (s_1, a_1, \dots, s_t, a_t)$, and let $h_p = (s_1, a_1, \dots, s_{t - 1}, a_{t - 1})$.
    By Lemmas~\ref{lem:util}~and~\ref{lem:ic1} and Eq.~\eqref{eq:flow1},
    \begin{align*}
        & \phantom{\ \iff \ } u(h, s) \ge \frac{\etrans_{|h|}(s_p, a_p, s)}{\etrans_{|h|}(s_p, a_p, s')} \cdot u(h, s, s') \\
        & \iff z(h, s) \cdot \ua^M(h, s) \ge \frac{\etrans_{|h|}(s_p, a_p, s)}{\etrans_{|h|}(s_p, a_p, s')} \cdot z(h, s') \cdot \ua^{M, r_{h, s, s'}}(h, s) \\
        & \iff x(h_p, s_p, a_p) \cdot \ua^M(h, s) \ge x(h_p, s_p, a_p) \cdot \ua^{M, r_{h, s, s'}}(h, s).
    \end{align*}
    Note that when $x(h_p, s_p, a_p) = 0$, $(h, s)$ cannot be reached, because (1) if $z(h_p, s_p) > 0$, then when the (reported) history-state pair is $(h_p, s_p)$, the mechanism never takes action $a_p$, and (2) if $z(h_p, s_p) = 0$, then such an impossible action exists somewhere in $h_p$.
    In such cases, $\pi(h, s)$ and $p(h, s)$ will never be accessed, since it is impossible for the (reported) history to be $h$.
    In other words, when $(h, s)$ is reachable, we must have $x(h_p, s_p, a_p) > 0$, in which case the last inequality is equivalent to $\ua^M(h, s) \ge \ua^{M, r_{h, s, s'}}(h, s)$.
\end{proof}

\begin{proof}[Proof of Lemma~\ref{lem:ic}]
    We only need to show that IC is equivalent to robustness against single-step misreporting.
    We prove this inductively, aiming to eliminate misreporting one step at a time.
    To be more specific, consider the following partial reporting strategy.
    For a reporting strategy $r$, $t \in [T]$, let $r|_{\ge t}$ denote the reporting strategy restricted to time $t, t + 1, \dots, T$, i.e., for any $h' \in \cH$, $s' \in \cS$,
    \[
        r|_{\ge t}(h', s') = \begin{cases}
            s', \text{if } |h'| + 1 < t \\
            r(h', s'), \text{otherwise}
        \end{cases}.
    \]
    Similarly, let $r|_{< t}$ denote $r$ restricted to time $1, 2, \dots, t - 1$, and $r|_{= t}$ denote $r$ restricted to time $t$.
    We show inductively that for any reachable history-state pair $(h, s)$, and any reporting strategy $r$,
    \[
        \ua^{M, (r|_{< |h| + 1})}(h, s) \ge \ua^{M, r}(h, s).
    \]
    Without loss of generality, we assume that for any unreachable pair $(h', s')$, $r$ simply reports truthfully, i.e., $r(h', s') = s'$.

    Recall that $r(h)$ is the reported history given by $r$ when the true history is $h$.
    When $|h| = T - 1$, the above claim is implied by Lemma~\ref{lem:ic2}, because
    \[
        \ua^{M, r}(h, s) = \ua^{M, (r|_{\ge T})}(r(h), s) \ge \ua^M(r(h), s) = \ua^{M, (r|_{< T})}(h, s).
    \]
    Now suppose $|h| < T - 1$.
    By the induction hypothesis, we have
    \[
        \ua^{M, r}(h, s) = \ua^{M, (r|_{\ge |h| + 1})}(r(h), s) \le \ua^{M, ((r|_{\ge |h| + 1})|_{< |h| + 2})}(r(h), s) = \ua^{M, (r|_{= |h| + 1})}(r(h), s).
    \]
    Now again by Lemma~\ref{lem:ic2}, we have
    \[
        \ua^{M, r}(h, s) \le \ua^{M, (r|_{= |h| + 1})}(r(h), s) \le \ua^M(r(h), s) = \ua^{M, (r|_{< |h| + 1})}(h, s),
    \]
    which establishes the above claim.

    Now observe that as a special case of the claim, for any $s \in \cS$,
    \[
        \ua^{M, r}(\emptyset, s) \le \ua^{M, (r|_{< 1})}(\emptyset, s) = \ua^M(\emptyset, s).
    \]
    Now summing over $s$, this implies that for any reporting strategy $r$,
    \[
        \ua^{M, r}(\emptyset) = \sum_{s \in \cS} \trans_0(s) \cdot \ua^{M, r}(\emptyset, s) \le \sum_{s \in \cS} \trans_0(s) \cdot \ua^M(\emptyset, s) = \ua^M(\emptyset). \qedhere
    \]
\end{proof}

\begin{proof}[Proof of Theorem~\ref{thm:short}]
    Given the correspondence between mechanisms and LP variables, by Lemma~\ref{lem:ic}, it is easy to see that (modulo the unreachable parts) every IC and IR mechanism corresponds bijectively to a feasible solution to the LP in Figure~\ref{fig:lp}.
    Moreover, by Lemma~\ref{lem:obj}, the objective value of this solution is precisely the principal's overall utility, which directly implies that an optimal solution to the LP corresponds to an IC and IR mechanism which maximizes the principal's overall utility.

    Now observe that the number of variables and the number of constraints in the LP are both $O(|\cS|^{T + 1} |\cA|^T)$.
    Moreover, all relevant coefficients in the LP can be encoded using $O(L)$ bits.
    It is well-known that such an LP can be solved in time $\mathrm{poly}(|\cS|^T, |\cA|^T, L)$.
\end{proof}

\section{Omitted Proofs from Section~\ref{sec:myopic}}

\begin{proof}[Proof of Lemma~\ref{lem:characterization}]
    We construct $M'$ explicitly based on $M$.
    Let $\pi'(t, s_p, a_p, s, a)$ be the probability that $M'$ chooses action $a$ at time $t$ in state $s$ when the previous state-action pair is $(s_p, a_p)$.
    Similarly, let $p'(t, s_p, a_p, s)$ be the payment specified by $M'$ at time $t$ in state $s$ when the previous state-action pair is $(s_p, a_p)$.
    We construct $M'$ from $M$ inductively as follows.
    For each $t \in [T]$, $s_p \in \cS$ and $a_p \in \cA$, let $h^*(t, s_p, a_p) \in \cH_{t - 1}$ be any history such that
    \begin{align*}
        h^*(t, s_p, a_p) \in \argmax_{h \in \cH_{t - 1}: (s_p, a_p) = \last(h)} \sum_{s \in \cS} \trans_{t - 1}(s_p, a_p, s) \cdot & \left(p(h, s) + \sum_{a \in \cA} \pi(h, s, a) \cdot \left(\vp_{|h| + 1}(s, a) \vphantom{\sum_{s' \in \cS}} \right. \right. \\
        & \left. \left. + \sum_{s' \in \cS} \trans_t(s, a, s') \cdot \up^M(h + (s, a), s')\right)\right).
    \end{align*}
    Then, for all $s \in \cS$, let
    \[
        \pi'(t, s_p, a_p, s) = \pi(h^*(t, s_p, a_p), s) \quad \text{and} \quad p(t, s_p, a_p, s) = p(h^*(t, s_p, a_p), s).
    \]
    This finishes the construction of $M'$.

    We first show that $\up^{M'}(\emptyset) \ge \up^M(\emptyset)$, by inductively showing a stronger claim: for all $h \in \cH$,
    \[
        \sum_s \trans_{|h|}(s_p, a_p, s) \cdot \up^{M'}(h, s) \ge \sum_s \trans_{|h|}(s_p, a_p, s) \cdot \up^M(h, s),
    \]
    where $(s_p, a_p) = \last(h)$.
    For all $h \in \cH_{T - 1}$, letting $(s_p, a_p) = \last(h)$, by the construction of $M'$, we have
    \begin{align*}
        \sum_s \trans_{T - 1}(s_p, a_p, s) \cdot \up^{M'}(h, s) & = \sum_s \trans_{T - 1}(s_p, a_p, s) \cdot \up^M(h^*(T, s_p, a_p), s) \\
        & \ge \sum_s \trans_{T - 1}(s_p, a_p, s) \cdot \up^M(h, s).
    \end{align*}
    Now for all $h \in \cH$ where $|h| < T - 1$, letting $(s_p, a_p) = \last(h)$ and $h^* = h^*(|h| + 1, s_p, a_p)$, we have
    \begin{align*}
        & \phantom{\ =\ } \sum_s \trans_{|h|}(s_p, a_p, s) \cdot \up^{M'}(h, s) \\
        & = \sum_s \trans_{|h|}(s_p, a_p, s) \cdot \left(p(h^*, s) + \sum_{a \in \cA} \pi(h^*, s, a) \cdot \left(\vp_{|h| + 1}(s, a) \vphantom{\sum_{s' \in \cS}} + \sum_{s' \in \cS} \trans_t(s, a, s') \cdot \up^{M'}(h + (s, a), s')\right)\right) \\
        & = \sum_s \trans_{|h|}(s_p, a_p, s) \cdot \left(p(h^*, s) + \sum_{a \in \cA} \pi(h^*, s, a) \cdot \left(\vp_{|h| + 1}(s, a) \vphantom{\sum_{s' \in \cS}} + \sum_{s' \in \cS} \trans_t(s, a, s') \cdot \up^{M'}(h^* + (s, a), s')\right)\right) \tag{property of $M'$} \\
        & \ge \sum_s \trans_{|h|}(s_p, a_p, s) \cdot \left(p(h^*, s) + \sum_{a \in \cA} \pi(h^*, s, a) \cdot \left(\vp_{|h| + 1}(s, a) \vphantom{\sum_{s' \in \cS}} + \sum_{s' \in \cS} \trans_t(s, a, s') \cdot \up^M(h^* + (s, a), s')\right)\right) \tag{induction hypothesis} \\
        & \ge \sum_s \trans_{|h|}(s_p, a_p, s) \cdot \left(p(h, s) + \sum_{a \in \cA} \pi(h, s, a) \cdot \left(\vp_{|h| + 1}(s, a) \vphantom{\sum_{s' \in \cS}} + \sum_{s' \in \cS} \trans_t(s, a, s') \cdot \up^M(h + (s, a), s')\right)\right) \tag{choice of $h^*$} \\
        & = \sum_s \trans_{|h|}(s_p, a_p, s) \cdot \up^M(h, s).
    \end{align*}
    Then in particular, we have
    \[
        \up^{M'}(\emptyset) = \sum_s \trans_0(s) \cdot \up^{M'}(\emptyset, s) \ge \sum_s \trans_0(s) \cdot \up^M(\emptyset, s) = \up^M(\emptyset).
    \]

    Finally we prove that $M'$ is IC.
    By the proof of Lemma~\ref{lem:ic}, we only need to show that $M'$ is robust against any single-step reporting strategy $r_{h, s, s'}$.
    In fact, letting $(s_p, a_p) = \last(h)$ and $h^* = h^*(|h| + 1, s_p, a_p)$,
    \[
        \ua^{M'}(h, s) = \sum_a \pi(h^*, s, a) \cdot \va_{|h| + 1}(s, a) + p(h^*, s) = \ua^M(h^*, s).
    \]
    Moreover,
    \[
        \ua^{M', r_{h, s, s'}}(h, s) = \sum_a \pi(h^*, s', a) \cdot \va_{|h| + 1}(s, a) + p(h^*, s) = \ua^{M, r_{h, s, s'}}(h^*, s).
    \]
    Since $M$ is IC, we have
    \[
        \ua^{M'}(h, s) = \ua^M(h^*, s) \ge \ua^{M, r_{h, s, s'}}(h^*, s) = \ua^{M', r_{h, s, s'}}(h, s).
    \]
    Now by the argument in the proof of Lemma~\ref{lem:ic}, we know that for all reporting strategy $r$, $h \in \cH$, $s \in \cS$,
    \[
        \ua^{M', r}(h, s) \le \ua^{M', (r|_{< |h| + 1})}(h, s),
    \]
    so
    \[
        \ua^{M', (r|_{\ge |h| + 1})}(h, s) \le \ua^{M', ((r|_{\ge |h| + 1})|_{< |h| + 1})}(h, s) = \ua^{M'}(h, s),
    \]
    which is precisely the IC requirement for myopic agents.
    Similar arguments guarantee that $M'$ has the same IR property as $M$.
\end{proof}

\begin{proof}[Proof of Theorem~\ref{thm:myopic}]
    We first argue the easy part, i.e., the time complexity.
    Observe that calls to $\optstatmech$ dominates the time complexity.
    Moreover, the algorithm makes $T |\cS| |\cA|$ calls to $\optstatmech$, so the overall time complexity is as stated.

    Now we show the optimality of the computed mechanism $M$.
    We prove inductively a stronger claim, i.e., for any $t \in [T]$, $s_p \in \cS$, $a_p \in \cA$,
    \[
        \sum_s \trans_0(s_p, a_p, s) \cdot \up^M(t, s_p, a_p, s) = \max_{M'} \trans_0(s_p, a_p, s) \cdot \up^{M'}(t, s_p, a_p, s),
    \]
    where the maximum is over all succinct mechanisms $M'$ that are IC and (optionally) IR.
    First observe that for all $s \in \cS$, $a \in \cA$,
    \[
        u(T, s, a) = \vp_T(s, a).
    \]
    So, for all $s_p \in \cS$, $a_p \in \cA$,
    \begin{align*}
        & \phantom{\ =\ } \sum_s \trans_0(s_p, a_p, s) \cdot \up^M(T, s_p, a_p, s) \\
        & = \sum_s \trans_0(s_p, a_p, s) \cdot \left(p(T, s_p, a_p, s) + \sum_a \pi(T, s_p, a_p, s, a) \cdot \vp_T(s, a)\right) \\
        & = \max_{M' = (\pi', p')} \sum_s \trans_0(s_p, a_p, s) \cdot \left(p'(T, s_p, a_p, s) + \sum_a \pi'(T, s_p, a_p, s, a) \cdot \vp_T(s, a)\right) \tag{optimality of $M$ at time $T$ as a static mechanism} \\
        & = \max_{M'} \sum_s \trans_0(s_p, a_p, s) \cdot \up^M(T, s_p, a_p, s).
    \end{align*}
    Again, the maximum is over all succinct mechanisms $M'$ that are IC and (optionally) IR.

    Now for $t \in [T - 1]$, by the construction of $M$,
    \begin{align*}
        & \phantom{\ =\ } \sum_s \trans_0(s_p, a_p, s) \cdot \up^M(t, s_p, a_p, s) \\
        & = \sum_s \trans_0(s_p, a_p, s) \cdot \left(p(t, s_p, a_p, s) + \sum_a \pi(t, s_p, a_p, s, a) \cdot \left(\vp_t(s, a) \vphantom{\sum_{s'}} \right.\right. \\
        & \phantom{\ =\ } \left.\left. + \sum_{s'} \trans_t(s, a, s') \cdot \up^M(t + 1, s, a, s')\right)\right) \\
        & = \max_{M' = (\pi', p')} \sum_s \trans_0(s_p, a_p, s) \cdot \left(p'(t, s_p, a_p, s) + \sum_a \pi'(t, s_p, a_p, s, a) \cdot \left(\vp_t(s, a) \vphantom{\sum_{s'}} \right.\right. \\
        & \phantom{\ =\ } \left.\left. + \sum_{s'} \trans_t(s, a, s') \cdot \up^M(t + 1, s, a, s')\right)\right). \tag{optimality of $M$ at time $t$ as a static mechanism}
    \end{align*}
    By the induction hypothesis and the fact that $M'$ is succinct,
    \begin{align*}
        & \phantom{\ =\ } \sum_s \trans_0(s_p, a_p, s) \cdot \up^M(t, s_p, a_p, s) \\
        & = \max_{M' = (\pi', p')} \sum_s \trans_0(s_p, a_p, s) \cdot \left(p'(t, s_p, a_p, s) + \sum_a \pi'(t, s_p, a_p, s, a) \cdot \left(\vp_t(s, a) \vphantom{\sum_{s'}} \right.\right. \\
        & \phantom{\ =\ } \left.\left. + \max_{M''} \sum_{s'} \trans_t(s, a, s') \cdot \up^{M''}(t + 1, s, a, s')\right)\right) \tag{induction hypothesis} \\
        & = \max_{M' = (\pi', p')} \sum_s \trans_0(s_p, a_p, s) \cdot \left(p'(t, s_p, a_p, s) + \sum_a \pi'(t, s_p, a_p, s, a) \cdot \left(\vp_t(s, a) \vphantom{\sum_{s'}} \right.\right. \\
        & \phantom{\ =\ } \left.\left. + \sum_{s'} \trans_t(s, a, s') \cdot \up^{M'}(t + 1, s, a, s')\right)\right) \tag{$M'$ is succinct} \\
        & = \max_{M'} \sum_s \trans_0(s_p, a_p, s) \cdot \up^{M'}(t, s_p, a_p, s).
    \end{align*}
    All maxima are over all succinct mechanisms that are IC and (optionally) IR.
    As a result, we have
    \[
        \up^M(\emptyset) = \sum_s \trans_0(s) \cdot \up^M(\emptyset, s) = \max_{M'} \sum_s \trans_0(s) \cdot \up^{M'}(\emptyset, s) = \max_{M'} \up^{M'}(\emptyset). \qedhere
    \]
\end{proof}

\section{Omitted Proofs from Section~\ref{sec:memoryless}}

\begin{proof}[Proof of Theorem~\ref{thm:suboptimality}]
    First suppose the agent is patient and without loss of generality has a discount factor of $1$.
    Let $T = 2$ and $\cS = \cA = [n]$ where $n \ge \eps^{-1}$.
    The initial distribution is uniform over $[n]$, i.e., $\trans_0(i) = 1 / n$ for all $i \in [n]$, i.e., no matter what action is played, all states always transition to state $1$.
    The transition operator is such that $\trans_1(i, j, 1) = 1$ for all $i, j \in [n]$.
    At time $T = 2$, the principal's valuations are $\vp_T(i, j) = 0$ for all $i, j \in [n]$.
    At time $1$, the principal's valuation function is such that for all $i, j \in [n]$, $\vp_1(i, j) = 1$ if $i = j$, and $\vp_1(i, j) = 0$ if $i \ne j$.
    For $t \in [T]$, the agent's valuation function is such that for all $i, j \in [n]$, $\va_t(i, j) = 0$ if $i = j$, and $\va_t(i, j) = 1$ if $i \ne j$.

    Consider the principal's optimal utility, which is clearly upper bounded by $1$ ($1$ at time $1$ and $0$ at time $2$).
    The following mechanism is IC and achieves this upper bound:
    \begin{itemize}
        \item At time $1$, play action $i$ for each state $i \in [n]$.
        \item At time $T = 2$, play action $(i \bmod n) + 1$ iff the state at time $1$ is $i$.
    \end{itemize}
    The mechanism is IC because regardless of the (reported) initial state, the agent achieves overall utility $1$.
    It is easy to check this mechanism achieves utility $1$.

    On the other hand, any memoryless IC mechanism can achieve utility at most $1 / n \le \eps$.
    This is because at time $T = 2$, the current state provides absolutely no information, so the mechanism has to perform the same (randomized) action regardless of the initial state.
    As a result, in order to be IC, the mechanism has to satisfy the following condition at time $1$: for all $i, j \in [n]$, $\pi(i, i) \le \pi(j, i)$, where $\pi(a, b)$ is the probability that action $b$ is played in state $a$ at time $1$.
    So the principal's utility can be bounded as follows:
    \[
        \frac1n \sum_i \pi(i, i) \le \frac1n \sum_i \left(\frac1n \sum_j \pi(j, i)\right) = \frac{1}{n^2} \sum_{i, j} \pi(j, i) = \frac1n.
    \]
    This concludes the proof when the agent is patient.

    Now consider the case with a myopic agent.
    Again, let $T = 2$ and $\cS = \cA = [n]$ where $n \ge \eps^{-1}$.
    The initial distribution is again uniform over $[n]$, i.e., $\trans_0(i) = 1 / n$ for all $i \in [n]$.
    The transition operator is such that $\trans_1(i, j, i) = 1$ for all $i, j \in [n]$, i.e., no matter what action is played, state $i$ always transitions to state $i$.
    At time $1$, the principal's and the agent's valuations are $\vp_1(i, j) = \va_1(i, j) = 0$ for all $i, j \in [n]$.
    At time $T = 2$, the principal's valuation function is such that for all $i, j \in [n]$, $\vp_T(i, j) = 1$ if $i = j$, and $\vp_T(i, j) = 0$ if $i \ne j$.
    And the agent's valuation function is such that for all $i, j \in [n]$, $\va_T(i, j) = 0$ if $i = j$, and $\va_T(i, j) = 1$ if $i \ne j$.

    The principal's optimal utility, $1$, is achieved by the following succinct (but not memoryless) IC mechanism:
    \begin{itemize}
        \item At time $1$, play action $1$ for all states.
        \item At time $2$, play action $i$ iff the state at time $1$ is $i$.
    \end{itemize}
    The mechanism is IC in particular because the agent is myopic and cannot change the past.
    It is easy to check this mechanism achieves utility $1$.

    On the other hand, any memoryless IC mechanism can achieve utility at most $1 / n \le \eps$.
    This is because at time $T = 2$, the mechanism cannot memorize anything before, so it has to be IC based only on the current state, which puts the mechanism in a situation that is essentially the same as at time $1$ in the hard instance for patient agents.
    Similar arguments then guarantee that the principal's utility is at most $1 / n$, which concludes the proof for myopic agents.
    Finally, we note that the above constructions work even if payments are allowed.
\end{proof}

\begin{proof}[Proof of Theorem~\ref{thm:np-hardness}]
    We use reductions from MAX-SAT similar to that in Theorem~\ref{thm:long} for both myopic and patient agents.
    First consider the case where the agent is patient with a discount factor of $1$.
    In this case, the reduction in Theorem~\ref{thm:long} applies without any modification.
    In particular, since the principal and the agent are in a zero-sum situation, without loss of generality, any optimal memoryless mechanism does not depend on the reported states.
    And again, since the principal's utility is multilinear in the actions at each time, there is a deterministic mechanism which is optimal.
    As argued in the proof of Theorem~\ref{thm:long}, such a mechanism corresponds precisely to an optimal assignment of variables in the MAX-SAT instance, which implies the $7/8 + \eps$ inapproximability.

    Now consider the case where the agent is myopic.
    Here we slightly modify the reduction, and in particular, the agent's valuation functions.
    That is, for each $t \in [T]$ and $i \in [m]$, we let
    \[
        \va_t(s, a_\mathrm{pos}) = c \quad \text{and} \quad \va_t(s, a_\mathrm{neg}) = 0,
    \]
    for all $s \in \cS$, where $c > 0$ is an arbitrarily small constant.
    This guarantees that at any time $t$, in order to be IC, the (randomized) actions for all states have to be exactly the same.
    Then since the principal's utility is multilinear, again it is without loss of generality to consider deterministic mechanisms, which correspond to assignments of variables.
    The ratio of $7/8 + \eps$ follows immediately.
    Finally, we remark that the above reductions still work when payments are allowed.
\end{proof}

\end{document}